\newtheorem{theorem}{Theorem}
\theoremstyle{plain}
\newtheorem{corollary}[theorem]{Corollary}
\newtheorem{definition}[theorem]{Definition}
\newtheorem{lemma}[theorem]{Lemma}
\newtheorem{proposition}[theorem]{Proposition}
\numberwithin{equation}{section}
\def\evt{}
\newcommand{\modu}{\,\,\,\mathrm{mod}\,}
\def\sisalter{[f,g]}
\def\sq{(x_n)}
\def\evt{}
\def\rojo{}
\def\DD{\Delta}
\def\kk{\kappa}
\def\Zpi{\Z_{p_i^{s_i}}}
\def\kkdef{\beta (d_1-1) - \alpha (d_2 -1)}
\def\sisalter{[f,g]}
\def\sq{(x_n)}
\def\Per{\mathrm{Per}}
\def\N{\mathbb{N}}
\def\R{\mathbb{R}}
\def\Z{\mathbb{Z}}
\def\Q{\mathbb{Q}}
\def\menos{\backslash}
\def\O{\mathbb{O}}
\renewcommand{\S}{\mathbb{S}}
\def\deg{\textrm{deg}}
\def\gcd{\mathrm{gcd}}
\begin{document}

\title[Computing odd periods of alternating systems of affine circle maps]{Computing odd periods of alternating systems of affine circle maps}

\author{J. S. C\'anovas}
\address{Departamento de Matem\'atica Aplicada y Estad\'{\i}stica\\
Universidad Polit\'ecnica de Cartagena,  Campus Muralla del Mar,
30203--Cartagena (Spain).} \email{Jose.Canovas@upct.es}
\author{A. Linero Bas}
\address{Universidad de Murcia, Campus de Espinardo, 30100-Murcia (Spain).}
\email{lineroba@um.es}
\author{G. Soler L\'opez}
\address{Departamento de Matem\'atica Aplicada y Estad\'{\i}stica\\
Universidad Polit\'ecnica de Cartagena,, Paseo Alfonso XIII, 52,
30203--Cartagena (Spain).} \email{Gabriel.Soler@upct.es }

\maketitle

\begin{abstract} Let $f,g$ be affine  circle maps and let  $[f,g]$ be the alternating system
generated by $f$ and $g$. We present an algorithm to compute the  periodic structure of $[f,g]$. This study complements the papers
\cite{calisoalternated1,calisoalternated2}.

%


{\bfseries
February 2, 2019. 

The final version of this paper  will be published in Journal of Difference Equations and Applications.}
\end{abstract}
\noindent\textit{Keywords:} Circle maps, alternating system, affine map, periodic structure, algorithm.

\noindent\textit{Mathematics Subject Classification (2010)}: 37E10; 37E05; 39A11.

\section{Introduction}

\def\S{\mathbb{S}} 

Given a topological space $X$, let $C(X)$ be the set of continuous maps defined from  $X$ to $X$, and let  
$(f_n)$ be  a sequence of
maps from $C(X)$. A \emph{non--autonomous  dynamical system} is built
by iterating  sequentially the maps  $f_n$, that is, the \emph{orbits} are:
\begin{equation}\label{Eq:NonAut}
x_{n+1}=f_n(x_n), n=0,1,2,\ldots,
\end{equation}
with $x_0\in X$. When the sequence $(f_n)$ is periodic  of period $p$ then the system,
denoted by $[f_0,f_1,\dots,f_{p-1}]$,  is called an \emph{alternating system.}  This type of
dynamical systems as well as the general non-autonomous case have received special attention in the literature due to
their potential applications to
natural and social sciences (for instance, see \cite{CuHen,ElaSa,CaCoSa,Krau}). To this respect, take into account
that the deterministic law can be dependent on the time due, for instance,
to effects of seasonality producing changes in the environment (although we are interested in
discrete systems, for non-autonomous continuous models dealing with the effect of seasonality, the reader
is referred to the survey \cite{Buono}), or due to the appearance of new information or technical aspects
in economical or social features.
Moreover, from the theoretical point of view, many authors have studied the qualitative properties of the orbits generated by these systems,
see for example \cite{kolsnohatro,alsharawiEtAl,canovaslinero06,CanoNonAut,daniellosteele,joao,calisoalternated1,calisoalternated2}.

We focus here on alternating systems generated by 2-periodic sequences of maps on the circle
$\S^1:=\{z\in\mathbb{C}:|z|=1\}$. Namely, $X= \S^1$,  $f,g:\S^1\to \S^1$ are continuous and the orbits generated by $[f,g]$ are sequences $(x_n)$ given by:
$$
x_{n+1}=
\left\{\begin{array}{l}
f(x_n) \textrm{ if $n$ is even,}\\
g(x_n) \textrm{ if $n$ is odd.}
\end{array}
\right.
$$
From the different qualitative aspects of alternating systems, we are interested in the topic
of the periodicity and in the search of periodic structures associated to these systems.
Recall that the sequence
$(x_n)$ and  $x_0$ are \emph{periodic} if there exists $m\in \N:=\{1,2,\ldots\}$ such that $x_{n+m}=x_n$
for all $n=0,1,2,\ldots,$ and the smallest positive
integer $m$ satisfying the previous condition is called the \emph{period}  or \emph{order} of $\sq$ and $x_0$.
In addition, we say that $x_0$, or the sequence $(x_n)$, is \emph{$m$--periodic}.
By $\Per\sisalter$ we understand the set of periods of $\sisalter$, that is,
the set of positive integers $m$ for which there exists $x_0\in X$ such that it is $m$--periodic. Notice that 
$\Per\sisalter=\Per[g,f]$. 
As usual we will use the following notation for the iterates of $f$: $f^1=f, f^{n} = f\circ f^{n-1}, n\geq 2$. By $f^0$ we mean the identity
on $X$. Observe that when $f=g$, we receive a classical \emph{autonomous dynamical system} and the analogous definitions on periodicity 
(for instance, see \cite{alseda}), in this case we will
use the standard notation and $[f,f]$ will be replaced simply by $f$.

In order to manage circle maps, it is necessary to introduce the notion of lifting
(the reader is referred to \cite{alseda} for a deeper development of this topic).
Consider the standard universal
covering $e:\mathbb{R}\rightarrow\mathbb{S}^1$ given by
$e(x)=e^{2\pi ix}.$ If $f\in C(\mathbb{S}^1),$ we find a (non unique) map
$F:[0,1]\rightarrow \mathbb{R}$ such that the diagram
\[
\begin{array}{ccc}
\lbrack 0,1] & \overset{F}{\longrightarrow } & \mathbb{R} \\ e\downarrow &
& \downarrow e \\
\mathbb{S}^{1} & \overset{f}{\longrightarrow } & \mathbb{S}^{1}%
\end{array}%
\]
commutes.  Then, $F$ is said to be a  \emph{lifting} of $f.$ Realize that
$e(0)=e(1)=1,$ so $e(F(1))=f(e(1))=f(e(0))=e(F(0))$. Thus, $d:=F(1)-F(0)\in\mathbb{Z}.$
The integer $d$ is said to be the \emph{degree}
of $f,$ and we denote it by $\deg(f).$ Moreover, we can extend
the lifting $F$ from $[0,1]$ to $\mathbb{R}$ by considering
$\widetilde{F}:\mathbb{R}\rightarrow\mathbb{R}$ as $\widetilde{F}(x)=F(x-[x])+[x]\deg(f),$
where $[\cdot]$ is the integer part of a real number $x$. To simplify the notation, in the sequel 
we will identify $\widetilde{F}$ with $F$.
It is well known that   $\deg(f\circ g)=\deg(f)\deg(g)$ and then  $\deg(f^m)=(\deg(f))^m$ for all
$m\geq 1$.

In this work we deal with \emph{affine circle maps}. A map $f\in C(\S^1)$ is said to be \emph{affine} if its  lifting is of the type
$F(x)=dx+\alpha$ for some $d\in\mathbb{Z}$ and $\alpha \in \mathbb{R}$. Hence
\begin{equation}\label{Eq:f}
f(z)= z^d e^{2\pi\alpha i}, z\in\mathbb{S}^1.
\end{equation}
The periodic structure of these maps 
is  given in \cite[Lemma 2]{calisoalternated1}:

\begin{lemma}\label{periodosrotaciones}
 Let $f\in C(\S^1)$ be  an affine map with associate lifting $F(x)=d x+\alpha$. Then:
\begin{enumerate}
 \item If $|d|\geq 3$ or $d=2$, then $\mathrm{Per} (f)=\N$.
 \item If $d=-2$, then $\mathrm{Per} (f)=\N\menos \{2\}$.
 \item If $d=-1$, then $\mathrm{Per} (f)=\{1,2\}$.
 \item If $d=1$, and  $\alpha=\frac{M}N \in\Q$, with $\gcd(M,N)=1$ and $N>0$,
then $\mathrm{Per} (f)=\{N\}$. Otherwise, $\mathrm{Per} (f)=\emptyset.$
 \item If $d=0$, then $\mathrm{Per}(f)=\{1\}$.
\end{enumerate}
\end{lemma}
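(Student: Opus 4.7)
The plan is to work with the lifting $F(x) = dx + \alpha$ and translate periodicity into an arithmetic condition: a point $z = e(x) \in \mathbb{S}^1$ is $n$-periodic for $f$ if and only if $F^n(x) - x \in \mathbb{Z}$. A straightforward induction gives the closed form
\[
F^n(x) = d^n x + \alpha\,\frac{d^n - 1}{d - 1}\quad (d \neq 1), \qquad F^n(x) = x + n\alpha\quad (d = 1),
\]
so the periodicity condition becomes either $(d^n - 1)\bigl(x + \alpha/(d-1)\bigr) \in \mathbb{Z}$ or $n\alpha \in \mathbb{Z}$, depending on whether $d \neq 1$.

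Three of the five cases dispatch quickly from this reduction. For (5), $d = 0$ makes $f$ the constant map $e(\alpha)$, whose unique periodic point is the fixed point $e(\alpha)$. For (4), the condition $n\alpha \in \mathbb{Z}$ is independent of $x$: if $\alpha = M/N$ in lowest terms then every point has minimal period $N$, while if $\alpha \notin \mathbb{Q}$ no $n$ works. For (3), $F^2(x) = x$ identically, so $\mathrm{Per}(f) \subseteq \{1,2\}$; the fixed-point equation $-2x + \alpha \in \mathbb{Z}$ has exactly two solutions modulo $1$, so both values $1$ and $2$ are realized.

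For (1) and (2) I would count. The equation $(d^n - 1)\bigl(x + \alpha/(d-1)\bigr) \in \mathbb{Z}$ has exactly $|d^n - 1|$ solutions on the circle, so by M\"obius inversion the number of points of minimal period $n$ is
\[
N_n \;=\; \sum_{k \mid n} \mu(n/k)\,|d^k - 1|.
\]
The exclusion in (2) arises from the coincidence $|(-2)^2 - 1| = 3 = |(-2) - 1|$, which forces $N_2 = 0$ when $d = -2$. The main obstacle is to verify $N_n > 0$ in every other situation with $|d| \geq 2$. I would do so via the crude lower bound
\[
N_n \;\geq\; |d^n - 1| \;-\; \sum_{\substack{p\,\mid\,n \\ p \text{ prime}}} |d^{n/p} - 1|,
\]
whose right-hand side is positive for $|d| \geq 2$ and $n \geq 3$ by an exponential-growth comparison: the leading term is of order $|d|^n$ while the subtractions consist of at most $\log_2 n$ terms of order at most $|d|^{n/2}$. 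The remaining small cases $n = 1, 2$ I would settle by direct inspection, the single failure being exactly $(d, n) = (-2, 2)$, which is precisely the exception recorded in part (2).
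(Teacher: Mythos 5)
Your argument is correct. Note that the paper does not actually prove Lemma~\ref{periodosrotaciones}: it is imported verbatim from \cite[Lemma 2]{calisoalternated1}, so there is no in-paper proof to compare against. Your route is the natural self-contained one: reduce $f^n(z)=z$ to the arithmetic condition $(d^n-1)\bigl(x+\alpha/(d-1)\bigr)\in\mathbb{Z}$ (resp.\ $n\alpha\in\mathbb{Z}$ for $d=1$), count exactly $|d^n-1|$ solutions on the circle when $|d|\ge 2$, and extract minimal periods via $N_n\ge |d^n-1|-\sum_{p\mid n}|d^{n/p}-1|$. The identity $|d^2-1|-|d-1|=|d-1|\,(|d+1|-1)$ cleanly isolates $d=-2$ as the unique degree with $|d|\ge 2$ for which $2$ fails to be a period, and the growth comparison for $n\ge 3$ does go through, since the number of prime divisors of $n$ is at most $\log_2 n$ while $|d^n-1|/|d^{n/2}+1|$ grows like $|d|^{n/2}\ge 2^{n/2}$. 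Two small points of hygiene, neither affecting correctness: (i) the opening equivalence should read ``$f^n(z)=z$ iff $F^n(x)-x\in\mathbb{Z}$''; minimal period $n$ additionally requires $F^s(x)-x\notin\mathbb{Z}$ for $0<s<n$, which is exactly what your counting step supplies; (ii) the displayed lower bound on $N_n$ is most directly justified by the inclusion of the non-minimal fixed points of $f^n$ into $\bigcup_{p\mid n}\mathrm{Fix}(f^{n/p})$ (a union bound) rather than by truncating the M\"obius sum, which by itself requires a Bonferroni-type inequality.
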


Even in the easiest case of alternating systems given by two affine circle maps on the circle,
 the characterization of the set of periods is hard. The present
authors have solved this problem in \cite{calisoalternated1,calisoalternated2}. We sketch the results
in Section~\ref{S:previousresults}. In Section~\ref{S:preliminar} we present some preliminary technical results and we give a
new determination for the odd periods of the system $[f,g]$, whose appearance is only possible whenever $f$ and $g$ commute. This new approach to the calculation of odd periods (based on the iterates of a suitable set $C$ of real numbers and on appropriate properties of congruences of integer numbers) is different from that developed in \cite{calisoalternated1,calisoalternated2} (in terms of a certain discrete map $\phi:\mathbb Z_{\Delta}\rightarrow \mathbb Z_{\Delta}$ in these last cases). Finally, our new approach originates a
direct algorithm, described in Section~\ref{S:algorithm}, to compute odd periods.

\section{Results on $\Per\sisalter$}\label{S:previousresults}

In what follows, $f$ and $g$ are affine   maps on $\S^1$ with respective liftings $F(x)=d_1x+\alpha$ and
$G(x)=d_2x+\beta$. We adopt the notation:
\begin{equation}
 \kk:=\kkdef\label{E:kappa}.
\end{equation}
Next, we introduce the characterization of the set of periods $\Per\sisalter$ through se\-veral results. They
are given depending on whether the degrees of $f$ and $g$ coincide (see Theorem~\ref{MP:samedegree})
or not. In the last case, there are subcases depending on the value $\kk$
(see Theorems~\ref{P:main1} and \ref{T:main}).

\begin{theorem}\label{MP:samedegree}(\cite[Theorem A]{calisoalternated1})
Let $f,g\in C(\S^1)$ be affine maps with liftings $F(x)=dx+\alpha$ and $G(x)=dx+\beta$, respectively,
and such that $\beta-\alpha\notin\mathbb{Z}$. Then:
\begin{itemize}
\item[(1)] If $|d|\geq 2$, then $\mathrm{Per} \sisalter=2\N$ (the set of even positive integers).
\item[(2)] If $|d|=1$, we distinguish the cases
\begin{enumerate}
 \item[(i)] If $\alpha+d\beta	\not\in\Q$  then $\mathrm{Per}\sisalter=\emptyset$.
 \item[(ii)] If $\alpha+d\beta=\frac{M}{N}\in\Q$ with $\textrm{gcd}(M,N)=1$ then $\mathrm{Per}\sisalter=\{2N\}$.
\end{enumerate}
\item[(3)] If $d=0,$ then $\mathrm{Per}\sisalter=\{2\}$.
\end{itemize}
\end{theorem}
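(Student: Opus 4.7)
The plan is to reduce the periodicity of the alternating system $\sisalter$ to the autonomous periodicity of the composition $g\circ f$, and then to invoke Lemma \ref{periodosrotaciones}.

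First I would rule out odd periods. Suppose $x_0$ were $m$-periodic for $\sisalter$ with $m$ odd, so $x_{n+m}=x_n$ for every $n$. Comparing the transition laws at positions $n$ and $n+m$, which differ because $m$ is odd, I have $x_{n+1}=f_n(x_n)$ and $x_{n+1}=x_{n+m+1}=f_{n+m}(x_n)$ with $\{f_n,f_{n+m}\}=\{f,g\}$, forcing $f(x_n)=g(x_n)$ for every $n\ge 0$. But $F$ and $G$ differ by the constant $\beta-\alpha$, so $f(z)=g(z)$ at any $z\in\S^1$ is equivalent to $\beta-\alpha\in\Z$, contradicting the hypothesis. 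Thus $\Per\sisalter\subseteq 2\N$.

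Next I would establish $\Per\sisalter=2\,\Per(g\circ f)$. For $m=2k$, the full periodicity condition $x_{n+2k}=x_n$ ($n\ge0$) is equivalent to $(g\circ f)^k(x_0)=x_0$, since the odd-indexed equalities are forced by the alternation; once odd periods are excluded, minimal periods match exactly under this reduction. A routine calculation yields the lifting
\[
(G\circ F)(x)=d^{2}x+(d\alpha+\beta),
\]
so $g\circ f$ is affine with degree $d^{2}$ and shift $d\alpha+\beta$.

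It then remains to apply Lemma \ref{periodosrotaciones} to $g\circ f$ in each case. When $|d|\ge 2$, the degree $d^{2}\ge 4$ lies in part (1) of the lemma, giving $\Per(g\circ f)=\N$ and hence $\Per\sisalter=2\N$. When $d=0$, $g\circ f$ is the constant map $z\mapsto e^{2\pi i\beta}$; part (5) yields $\Per(g\circ f)=\{1\}$, so $\Per\sisalter=\{2\}$. When $|d|=1$, $g\circ f$ is a pure rotation by $d\alpha+\beta$, which coincides with $\alpha+d\beta$ when $d=1$ and is its negative when $d=-1$; in either case these two quantities share the same irrationality and, in the rational case, the same reduced denominator $N$. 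Part (4) of the lemma then produces $\Per(g\circ f)=\emptyset$ in (i) and $\Per(g\circ f)=\{N\}$ in (ii), yielding $\Per\sisalter=\emptyset$ and $\Per\sisalter=\{2N\}$, respectively.

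I expect the main obstacle to be the exclusion of odd periods together with the precise matching of minimal periods in the reduction $\Per\sisalter=2\,\Per(g\circ f)$: once this bookkeeping is in place, everything else is a mechanical computation of the lifting of $g\circ f$ and an appeal to Lemma \ref{periodosrotaciones}.
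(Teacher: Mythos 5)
Your argument is correct: the exclusion of odd periods via $f(x_n)=g(x_n)$ (using the common degree to cancel $z^d$ and force $\beta-\alpha\in\Z$), the identity $\Per\sisalter=2\,\Per(g\circ f)$, and the application of Lemma~\ref{periodosrotaciones} to the lifting $(G\circ F)(x)=d^2x+(d\alpha+\beta)$ all go through, including the sign bookkeeping $d\alpha+\beta=\pm(\alpha+d\beta)$ when $|d|=1$. The paper itself states this theorem without proof (it is quoted from \cite{calisoalternated1}), and your reduction to the composition is essentially the approach of that reference (compare the conditions on $(G\circ F)^m$ in Lemma~\ref{L:Char_lift}), so there is nothing to flag.
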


When $f$ and $g$ have different degrees we must distinguish two cases depending on whether $\kk$, cf. Eq.~(\ref{E:kappa}),
is or is not an integer number. By $\mathbb{O}$ we denote the set of odd natural numbers.

\begin{theorem}\label{P:main1}(\cite[Theorem B]{calisoalternated1})
Let $f,g\in C(\S^1)$ be affine maps with liftings $F(x)=d_1 x +\alpha$ and $G(x)=d_2 x +\beta$, with $d_1\neq d_2.$
Assume that $\kappa \notin\mathbb{Z}$ and let $\delta:=d_1d_2$. Then:
\begin{enumerate}
\item If $|\delta|\geq 3$ or $\delta=2$, then $\mathrm{Per}[f,g]=2\mathbb{N}$.
\item If $\delta=-2$, then $\mathrm{Per}[f,g]=2\mathbb{N}\setminus\{4\}$.
\item If $\delta=0$, then $\mathrm{Per}[f,g]=\{2\}$.
\item If $\delta=-1$, then $\mathrm{Per}[f,g]=\{2,4\}$.
\end{enumerate}
\end{theorem}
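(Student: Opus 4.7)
The plan is to first rule out odd periods under the hypothesis $\kappa\notin\mathbb{Z}$, and then reduce the remaining (even) periods of $[f,g]$ to periods of the autonomous map $g\circ f$, where Lemma~\ref{periodosrotaciones} applies directly.

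The key preliminary observation is that the liftings $F\circ G(x)=d_{1}d_{2}x+d_{1}\beta+\alpha$ and $G\circ F(x)=d_{1}d_{2}x+d_{2}\alpha+\beta$ differ by the constant $\kappa$ defined in (\ref{E:kappa}). Consequently $f\circ g=g\circ f$ on $\mathbb{S}^{1}$ if and only if $\kappa\in\mathbb{Z}$, so under our hypothesis $f$ and $g$ do not commute. To rule out odd periods, suppose $x_{0}$ had odd $[f,g]$-period $m=2k+1$ and denote its finite orbit $O=\{x_{0},x_{1},\dots,x_{m-1}\}$. Comparing $x_{m+1}$ with $x_{1}$ via the alternation rule (the map applied at step $m$ is $g$, since $m$ is odd) forces $g(x_{0})=f(x_{0})$; the same comparison at subsequent indices yields $f|_{O}=g|_{O}$, so both $f$ and $g$ act on $O$ as the same cyclic permutation sending $x_{i}$ to $x_{i+1 \,\mathrm{mod}\, m}$. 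In particular, for any $z\in O$, $f(g(z))=g(g(z))=g(f(z))$; but since $F\circ G-G\circ F$ is the constant $\kappa$, a single pointwise coincidence modulo $\mathbb{Z}$ already forces $\kappa\in\mathbb{Z}$, a contradiction. Therefore $\mathrm{Per}[f,g]\subseteq 2\mathbb{N}$.

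For even periods, I use the identity $x_{2j}=(g\circ f)^{j}(x_{0})$ for all $j\geq 0$. Together with the absence of odd periods, this implies that $x_{0}$ has $[f,g]$-period exactly $2k$ if and only if its $(g\circ f)$-period is exactly $k$, hence $\mathrm{Per}[f,g]=2\cdot\mathrm{Per}(g\circ f)$. The autonomous map $g\circ f$ is affine with degree $\delta=d_{1}d_{2}$, so Lemma~\ref{periodosrotaciones} immediately delivers the four cases: $\mathrm{Per}(g\circ f)=\mathbb{N}$ when $|\delta|\geq 3$ or $\delta=2$, $\mathrm{Per}(g\circ f)=\mathbb{N}\setminus\{2\}$ when $\delta=-2$, $\mathrm{Per}(g\circ f)=\{1\}$ when $\delta=0$, and $\mathrm{Per}(g\circ f)=\{1,2\}$ when $\delta=-1$. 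The case $\delta=1$ cannot occur, since with integer degrees it would force $d_{1}=d_{2}=\pm 1$, contrary to $d_{1}\neq d_{2}$, so these four cases are exhaustive.

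The main obstacle is the odd-period step: one must track the alternation along the orbit carefully to establish $f|_{O}=g|_{O}$, and then exploit the affinity (the constancy of $F\circ G-G\circ F$) to upgrade pointwise commutation on a finite set to the global statement $\kappa\in\mathbb{Z}$. Everything else is routine: direct computation with liftings, a bijection between even periods and $(g\circ f)$-periods, and a case split governed by Lemma~\ref{periodosrotaciones}.
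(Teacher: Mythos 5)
Your proposal is correct. Note that the paper you are reading does not prove Theorem~\ref{P:main1} at all: it imports it verbatim from \cite[Theorem B]{calisoalternated1}, so there is no in-text proof to match against. Your argument is a sound self-contained reconstruction, and its two halves line up exactly with the machinery the paper does quote: the elimination of odd periods when $\kappa\notin\mathbb{Z}$ is the content of Theorem~\ref{T:oddPeriods} (the paper's route there is via commutativity of $f$ and $g$, which is precisely your observation that $F\circ G-G\circ F\equiv\kappa$, so your derivation of $f|_O=g|_O$ from the parity of $m$ and the subsequent one-point commutation contradiction is a legitimate direct proof of that special case), and the reduction $\mathrm{Per}[f,g]\cap 2\mathbb{N}=2\cdot\mathrm{Per}(g\circ f)$ followed by Lemma~\ref{periodosrotaciones} applied to the affine map $g\circ f$ of degree $\delta$ is the standard mechanism behind the even-period tables in Section~\ref{S:previousresults}. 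The details all check: the index bookkeeping ($x_{m+1}=g(x_m)=g(x_0)$ versus $x_{m+1}=x_1=f(x_0)$ because $m$ is odd), the fact that an even shift preserves the alternation pattern (which is what makes ``period $2k$ in $[f,g]$'' equivalent to ``period $k$ for $g\circ f$'' once odd periods are excluded), and the exhaustiveness of the case split since $\delta=1$ would force $d_1=d_2$. No gaps.
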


\evt
\begin{theorem}\label{T:main}(\cite[Theorem C]{calisoalternated1})
Let $f,g\in C(\S^1)$ be affine maps with liftings $F(x)=d_1 x + \alpha$ and $G(x)=d_2 x +\beta,$ being $d_1\not=d_2$, $d_2\not=1$ and
$\kappa\in\Z$. Let
$$
A= \left\{
\begin{array}{l}
 \{4\} \textrm{ if $d_1d_2=-2$, }\\
\emptyset \textrm{ otherwise.}
\end{array}
\right.
$$
Let $\Lambda=\mathrm{Per}([f,g])\cap \mathbb{O}$. Then:
\begin{enumerate}
\item  If $d_1\in\{-1, 0, 1\}$, then $\mathrm{Per}([f,g])$ is described by the following table:

\begin{tabular}{|c||c|c|c|c|c|}
\hline
&\rojo $ d_1$&\rojo $d_2$&\rojo $\Lambda$&\rojo $\mathrm{Per}[f,g]$\\ \hline
(a)&0&&&\{1\}\\ \hline
(b)&1&$\notin\{-1,0,1\}$& $\{1\}$&$[2(\N\menos\{1\})\menos A]\cup\{1\}$ \\ \hline
(c)&1&-1&$\{1\}$& $\{1,4\}$\\ \hline
(d)&1&$\notin\{0,1\}$&$\{N\}$& $[2\N\menos A]\cup\{N\}$\\ \hline
(e)&1&-1&$\emptyset$& $\{2,4\}$\\ \hline
(f)&1&$\notin\{-1,0,1\}$&$\emptyset$&$2\N\menos A$\\ \hline
(g)&-1&$\in\{-3,-2\}$&$\{1\}$&$2(\N\menos\{1\})\cup\{1\}$\\ \hline
(h)&-1 & $\notin\{-3,-2,-1,0,1\}$ &$\{1\}$& $[2\N\menos A]\cup\{1\}$\\ \hline
(i)&-1&$\notin\{-1,0,1\}$&$\emptyset$&$2\N\menos A$\\ \hline
\end{tabular}

Here $N$ denotes an odd natural number bigger than $1$. In addition, the role of $d_1$ and $d_2$ in the above table can be changed.

\item \label{T:mainpartgeneral} If $\{d_1,d_2\}\cap\{-1,0,1\}=\emptyset$ then
 $$\mathrm{Per}[f,g]=[2\mathbb{N}\menos A] \cup \Lambda.$$
\end{enumerate}
\end{theorem}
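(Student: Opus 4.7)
The key algebraic input is that the hypothesis $\kappa\in\Z$ is equivalent to the commutation $F\circ G=G\circ F$ of the lifts: a direct calculation gives $F\circ G-G\circ F\equiv\kappa$, so $\kappa\in\Z$ if and only if $f\circ g=g\circ f$ on $\S^{1}$. Consequently any iterate of the alternating system $[f,g]$ equals a composition of the form $f^{a}\circ g^{b}$, and an orbit $(x_{n})$ is $m$-periodic with $m=2k$ iff $(g\circ f)^{k}(x_{0})=x_{0}$, while it is $m$-periodic with $m=2k+1$ iff $f^{k+1}g^{k}(x_{0})=x_{0}$. The even analysis thus reduces to the autonomous affine map $g\circ f$, whose lift is $d_{1}d_{2}\,x+d_{2}\alpha+\beta$; the odd part is, by definition, the set $\Lambda$.

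I would prove part~(2) first. The hypotheses $d_{1},d_{2}\notin\{-1,0,1\}$ force $|d_{1}d_{2}|\ge 4$, so Lemma~\ref{periodosrotaciones}(1) gives $\Per(g\circ f)=\N$, and also $A=\emptyset$. A counting argument then shows $2\N\subseteq\Per[f,g]$: the number of primitively $n$-periodic points of $g\circ f$ grows like $|d_{1}d_{2}|^{n}$, whereas the points that would have strictly smaller odd $[f,g]$-period satisfy $f^{j+1}g^{j}(x_{0})=x_{0}$ for some $j<n$, a set whose total cardinality is dominated by the former for every $n$. Hence for each $n$ there exists an orbit of exact $[f,g]$-period $2n$, and combining with $\Lambda=\Per[f,g]\cap\O$ yields $\Per[f,g]=2\N\cup\Lambda$.

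For part~(1) I would go through the rows of the table case by case, according to whether $d_{1}\in\{0,1,-1\}$. In case~(a), $d_{1}=0$ forces $f$ to be the constant map with value $c=e^{2\pi i\alpha}$; the hypothesis $\kappa\in\Z$ rewrites as $\alpha(d_{2}-1)+\beta\in\Z$, which is precisely the condition $g(c)=c$. All orbits therefore land at the fixed point $c$ after one step, yielding $\Per[f,g]=\{1\}$. For $d_{1}=\pm 1$ (rows (b)--(i)), $f$ is either a rotation (if $d_{1}=1$) or a reflection-plus-rotation (if $d_{1}=-1$); $g\circ f$ is then affine of degree $\pm d_{2}$, so Lemma~\ref{periodosrotaciones} gives the even part $2\N\setminus A$ with $A=\{4\}$ exactly when $d_{1}d_{2}=-2$. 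The $\Lambda$ column in each row records whether the odd periods come from a fixed point of $f$ ($\Lambda=\{1\}$), from a rational rotation number $M/N$ with $N>1$ odd ($\Lambda=\{N\}$), or are absent ($\Lambda=\emptyset$).

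The main obstacle is the interplay between the even and odd halves: an $n$-periodic point of $g\circ f$ produces a $2n$-periodic orbit of $[f,g]$ \emph{unless} it carries an extra symmetry halving the period into the odd range. This is exactly what removes $\{4\}$ from the even set when $d_{1}d_{2}=-2$ (the unique $2$-cycle of $g\circ f$ collapses to an odd orbit, absorbed by $\Lambda$), and it is what makes the precise characterization of $\Lambda$ the technically delicate step; it is the motivation for the new approach developed in Section~\ref{S:preliminar} and the algorithm of Section~\ref{S:algorithm}.
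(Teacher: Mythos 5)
First, a remark on scope: this paper does not prove Theorem~\ref{T:main} at all --- it is quoted from \cite[Theorem C]{calisoalternated1} as background in Section~\ref{S:previousresults} --- so your sketch can only be judged on its own terms. Your opening reduction is the right one (and is the one underlying the cited work): $\kappa\in\Z$ is exactly $F\circ G-G\circ F\equiv\kappa$, hence commutativity, and the even periods are governed by the autonomous affine map $g\circ f$ of degree $d_1d_2$. But two of your key claims are wrong as stated. (i) The equivalence ``$(x_n)$ is $(2k+1)$-periodic iff $f^{k+1}g^{k}(x_0)=x_0$'' is false: $x_{2k+1}=x_0$ forces $x_{2k+2}=g(x_0)$, which must equal $x_1=f(x_0)$, so you additionally need the shared-orbit condition $F^{i}(x_0)-G^{i}(x_0)\in\Z$ of Lemma~\ref{L:Char_lift}(4). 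Your part~(2) counting survives because there you only need necessity, but the competing set is mislabelled: a point with odd $[f,g]$-period $m$ has $(g\circ f)$-period exactly $m$, so for even $n$ nothing competes at all, and for odd $n$ the competitor is the set of points of odd period \emph{equal to} $n$, not ``strictly smaller''. (ii) Your explanation of $A=\{4\}$ is wrong: when $d_1d_2=-2$ the map $g\circ f$ has degree $-2$ and by Lemma~\ref{periodosrotaciones}(2) has \emph{no} point of period $2$; the period $4$ is absent because there is no $2$-cycle to begin with, not because ``the unique $2$-cycle of $g\circ f$ collapses to an odd orbit''.

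The more serious gap is part~(1), which your sketch asserts rather than derives. Your blanket statement that Lemma~\ref{periodosrotaciones} ``gives the even part $2\N\setminus A$'' contradicts rows (b), (c) and (g) of the very table you are proving, in which $2\notin\Per\sisalter$. For instance, in row (b) the hypothesis $\Lambda=\{1\}$ with $d_1=1$ forces $f$ to be the identity (a degree-one rotation with a fixed point), so every orbit is $x_0,x_0,g(x_0),g(x_0),\dots$ and a $g$-fixed point yields $[f,g]$-period $1$, never $2$; one gets $\Per\sisalter=\{1\}\cup 2(\Per(g)\setminus\{1\})$, which is where the excluded $2$ (and the excluded $4$ when $d_2=-2$) actually comes from. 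Similarly, the split between rows (g) and (h) for $d_1=-1$ hinges on whether \emph{every} fixed point of $g\circ f$ is a shared fixed point of $f$ and $g$ (so that the $2$-periodic orbits all collapse to period $1$), which happens precisely for $d_2\in\{-3,-2\}$. Deciding, case by case, when an exact $n$-cycle of $g\circ f$ survives with $[f,g]$-period $2n$ and when it is absorbed into $\Lambda$ is the substantive content of part~(1), and it is missing from your argument.
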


These results completely characterize the set of even periods in $\Per\sisalter$.
By means of the study of a dynamical system defined  on $\Z_{\Delta}$ for $\Delta=|d_1-d_2|$
it is possible to say  what odd numbers are in $\Per(\sisalter)$.

\subsection{The set $\Lambda$ of odd periods in $\Per\sisalter$}\label{SS:odd}

Let $d_1,d_2$ be integers, $\alpha,\beta$ be real numbers, $\kk=\beta (d_1-1) - \alpha (d_2 -1)$,   $\Delta=|d_1-d_2|$  and define
$\phi_{d_i,\kappa}:\mathbb{Z}_{\Delta }\rightarrow \mathbb{Z}%
_{\Delta }$,  $i\in\{1,2\}$, by
\begin{equation}
 \label{Eq:phi}
 \phi_{d_i,\kappa}(m):=(d_{i}m+\kappa )\,\modu (\Delta ).
\end{equation}
Since $d_{1}\equiv d_{2}\modu (\Delta )$, we have $\phi_{d_1,\kappa}=\phi_{d_2,\kappa}=:\phi$ and
the following result connects $\Lambda=\mathrm{Per}([f,g])\cap \mathbb{O}$ and $\Per(\phi)$.

\begin{theorem}\label{T:oddPeriods}(\cite[Prop. 15, Th. 17]{calisoalternated1})
Let $f,g\in C(\mathbb{S}^{1})$ be affine maps with associate liftings $F(x)=d_{1}x+\alpha $ and $%
G(x)=d_{2}x+\beta $ and  $d_{1}\neq d_{2}$. If $\kappa\not\in\Z$ then
$\Lambda=\emptyset$, otherwise  $\kappa \in
\mathbb{Z}$ and   {\rojo $\Lambda=\mathrm{Per}(\phi )\cap \O$.}
\end{theorem}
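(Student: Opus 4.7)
The plan is to identify the odd-period orbits of $\sisalter$ with orbits of $f$ contained in the coincidence set $C:=\{z\in\S^1:f(z)=g(z)\}$, and then to show that the restriction of $f$ to $C$ is (up to an obvious conjugacy) the map $\phi$ on $\Z_\Delta$.

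First, I would establish that if $x_0$ is $m$-periodic under $\sisalter$ with $m$ odd, then $f(x_n)=g(x_n)$ at every point of its orbit. At step $n+m$ the alternating rule applies the map of parity $(n+m)\bmod 2$, whereas at step $n$ it applies the map of parity $n\bmod 2$; since $m$ is odd these parities are opposite, so comparing $x_{n+m+1}=x_{n+1}$ with $x_{n+m}=x_n$ forces $f(x_n)=g(x_n)$ for every $n$. Hence the orbit is contained in $C$. In the lift, $C$ is described by $(d_1-d_2)x\equiv\beta-\alpha\pmod 1$, a finite set of $\Delta$ points. Requiring $x_1=F(x_0)\in C$, substituting $x_1=d_1x_0+\alpha$ and simplifying with the congruence for $x_0$, the condition collapses to $(d_1-1)\beta-(d_2-1)\alpha\equiv 0\pmod 1$, that is, $\kappa\in\Z$. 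Thus $\kappa\notin\Z$ immediately gives $\Lambda=\emptyset$.

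When $\kappa\in\Z$, I would parameterize $C$ by $\Z_\Delta$ through $k\mapsto a+k/\Delta\pmod 1$, where $a$ is any fixed solution of the defining congruence, and read off the action of $F$ in these coordinates. A direct computation, whose algebraic core is the identity $\epsilon(d_1-1)(\beta-\alpha)+\alpha\Delta=\epsilon\kappa$ with $\epsilon=\mathrm{sign}(d_1-d_2)$, reduces this action to $k\mapsto d_1k+\epsilon\kappa\pmod\Delta$. When $\epsilon=1$ this is exactly $\phi$; when $\epsilon=-1$ it is conjugate to $\phi$ via $k\mapsto -k$. Either way $\Per(f|_C)=\Per(\phi)$. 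Since $f=g$ on $C$, the alternating dynamics restricted to $C$ coincides with $f|_C$, so an odd integer $m$ is a period of $\sisalter$ iff $f|_C$ has a point of period $m$. Combined with the first step this yields $\Lambda=\Per(f|_C)\cap\O=\Per(\phi)\cap\O$.

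The main obstacle will be the coordinate computation identifying $F|_C$ with $\phi$: the sign $\epsilon$ is easy to mishandle, and without noticing that the hypothesis $\kappa\in\Z$ is precisely what packages the translation term into $\epsilon\kappa$ modulo $\Delta$, the appearance of $\phi$ looks unmotivated. By contrast, the conceptual step, that odd periodicity forces $f$ and $g$ to agree along the orbit, is short once one compares the alternation patterns at shifts of different parity.
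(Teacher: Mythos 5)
Your argument is correct. Note first that the paper does not actually prove Theorem~\ref{T:oddPeriods}; it imports it from \cite{calisoalternated1}, and the route you take --- odd periodicity forces $f$ and $g$ to agree along the whole orbit, the coincidence set lifts to $C=\left\{\frac{m+\beta-\alpha}{d_1-d_2}:m\in\Z\right\}$, membership of $F(x_0)$ in $C$ forces $\kappa\in\Z$, and the induced action on the $\Delta$ points of $e(C)$ is conjugate to $\phi$ --- is precisely the $\phi$-based approach of that reference, not the one developed in Section~\ref{S:preliminar} of this paper. Your individual steps check out: the parity comparison of $x_{n+m+1}=x_{n+1}$ is the standard proof that odd periods must be shared, the reduction of $F(x_0)\in C$ to $d_1m+\beta(d_1-1)-\alpha(d_2-1)\in\Z$, i.e.\ $\kappa\in\Z$, is right, and the identity $\epsilon(d_1-1)(\beta-\alpha)+\alpha\Delta=\epsilon\kappa$ is correct. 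Two remarks. The sign $\epsilon$ you worry about disappears if you index $e(C)$ by the integer $m$ itself via $m\mapsto e\left(\frac{m+\beta-\alpha}{d_1-d_2}\right)$, a bijection $\Z_\Delta\to e(C)$: then $F$ sends the point of index $m$ to the point of index $d_1m+\kappa \,\modu(\Delta)$ on the nose (this is exactly the first displayed computation in the proof of Lemma~\ref{L:Iter_mk}), so no conjugation by $k\mapsto -k$ is needed. Second, for the inclusion $\mathrm{Per}(\phi)\cap\O\subseteq\Lambda$ you implicitly use that $F(C)\subseteq C$ and $G(C)\subseteq C$ once $\kappa\in\Z$ (Lemma~\ref{L:Candi}(2)), so that the alternating orbit of a point of $e(C)$ really coincides with its $f$-orbit; this is true but should be said explicitly. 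By contrast, the paper's own new derivation of the odd periods (Lemmas~\ref{L:Iter_mk}--\ref{L:mnew} and Proposition~\ref{P:orders}) deliberately bypasses $\phi$ and computes the eventual period $\varepsilon_{m,\kappa}$ of each point of $C$ directly through congruences, arriving at Theorem~\ref{T:oddLambda}; your proof and Section~\ref{S:preliminar} are thus two genuinely different ways of computing the same set $\Lambda$, the former giving the conceptual identification with a linear map on $\Z_\Delta$ and the latter giving formulas amenable to the algorithm of Section~\ref{S:algorithm}.
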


Then, in order to describe the set $\Lambda=\O\cap\Per\sisalter$ it suffices to study the set $\Per(\phi)$.
Let $p,s\in \N$, and define

$$\sigma(p,s):=\left\{\begin{array}{lcl}
                      1,&\quad & \textrm{ if $p$ is odd or $p=2$ and $s=2$},\\
                      2,&\quad & \textrm{ otherwise}.
                     \end{array}
\right.$$

We introduce the first case, $\Delta=p^s$ with $p$ prime.
\evt
\begin{theorem}\label{main0}(\cite[Theorem A]{calisoalternated2})
 Let $\Delta=p^s$ where $p$ is a prime and $s\geq 1$  and let $\phi_{d,\kappa}:\Z_\Delta\to\Z_\Delta$ be
 defined by $\phi_{d,\kappa}(x)=dx+\kappa$,  $d,\kappa\in\Z_\Delta$. Then $\Per(\phi_{d,\kk})$ is one of the
 following sets:
\begin{enumerate}
 \item
 $\{1\}\cup \{Np^j\}_{j=0}^\alpha$ where $N$ is a divisor of $p-1$ and $\alpha\in\{0,1,\dots,s-\sigma(p,s)\}$.

 \item $\{p^\alpha\}$ for some $\alpha\in\{0,1,\dots,s\}$.

\end{enumerate}

Conversely, let $p$ be a prime, let $\Delta=p^s$ be with $s\geq 1$, and let $\mathcal{P}$ be one
of the above sets $\{1\}\cup \{Np^j\}_{j=0}^\alpha$ or $\{p^\alpha\}$, then there exists
$\phi_{d,\kk}:\Z_\Delta\to\Z_\Delta$ such that $\Per(\phi_{d,\kk})=\mathcal{P}.$ 

\end{theorem}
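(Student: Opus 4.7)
The starting point is the iteration formula $\phi_{d,\kk}^n(x) = d^n x + \kk\, S_n$, with $S_n = 1 + d + \cdots + d^{n-1}$, so that $x$ is $n$-periodic iff $(d^n - 1)\, x + \kk\, S_n \equiv 0 \pmod{p^s}$. I would analyse this congruence case by case on the residue of $d$ modulo $p$. If $d = 0$, or $p\mid d$ with $d \ne 0$, then $d$ is nilpotent in $\Z_{p^s}$ while $1-d$ is a unit, so $\phi$ has a unique fixed point towards which every orbit collapses, and $\Per(\phi_{d,\kk}) = \{1\}$. If $d = 1$ then $\phi$ is a pure translation and every orbit has period $p^s/\gcd(\kk, p^s) = p^\alpha$ with $\alpha = s - v_p(\kk)$, realising every singleton $\{p^\alpha\}$ of case~(2) as $\kk$ varies.

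The heart of the argument is the case where $d$ is a unit of $\Z_{p^s}$ with $d \ne 1$. When $d \not\equiv 1 \pmod p$ (only possible for $p$ odd), $1-d$ is invertible; a unique fixed point $x^*$ exists, and the translation $y = x - x^*$ reduces $\phi$ to $y \mapsto dy$, so the period of $y$ with $v_p(y) = j \in \{0,\dots,s-1\}$ is $\mathrm{ord}_{p^{s-j}}(d)$. A lifting-the-exponent computation with $N := \mathrm{ord}_p(d) \mid (p-1)$ and $t := v_p(d^N-1)$ gives $\mathrm{ord}_{p^k}(d) = N\, p^{\max(0,\,k-t)}$, hence $\Per(\phi_{d,\kk}) = \{1\}\cup\{N p^j : 0\le j \le \max(0,s-t)\}$; this is case~(1) with $\alpha = \max(0,s-t) \le s-1 = s-\sigma(p,s)$. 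When instead $d \equiv 1 \pmod p$, write $d = 1 + p^r u$ with $\gcd(u,p)=1$ and split further: if $v_p(\kk) \ge r$ a fixed point still exists and the same reduction yields a set of form~(1) with $N = 1$; if $v_p(\kk) < r$ there is no fixed point, and comparing $p$-adic valuations of $(d^n-1)x$ and $\kk\, S_n$ in the defining congruence forces every orbit to share the common period $p^{\,s-v_p(\kk)}$, a singleton of form~(2).

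For the converse I would exhibit $(d,\kk)$ realising each admissible set. For $\{p^\alpha\}$ take $d = 1$ and $\kk = p^{s-\alpha}$. For $\{1\}\cup\{Np^j\}_{j=0}^\alpha$ with $N\mid p-1$, choose a unit $d \in (\Z/p^s)^\times$ with $\mathrm{ord}_p(d) = N$ (available because $(\Z/p)^\times$ is cyclic of order $p-1$) and adjust it via Hensel-type lifting so that $v_p(d^N - 1) = s-\alpha$; then set $\kk = 0$. The bound $\alpha \le s-\sigma(p,s)$ is precisely what makes this valuation attainable inside $(\Z/p^s)^\times$.

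The main obstacle will be the subcase $p=2$. Since $(\Z/2^k)^\times$ is non-cyclic for $k\ge 3$ and the naive lifting-the-exponent identity fails when $v_2(d-1) = 1$, one has to use the refined formula $v_2(d^n - 1) = v_2(d-1) + v_2(d+1) + v_2(n) - 1$ for even $n$. The extra contribution $v_2(d+1) \ge 1$ is precisely what forces $\sigma(2,s) = 2$ for $s\ne 2$, while the exception $\sigma(2,2) = 1$ reflects that $(\Z/4)^\times$ is still cyclic. Reconciling this $2$-adic arithmetic with the claimed bound on $\alpha$, and verifying the converse at the boundary values, will be the most delicate bookkeeping in the proof.
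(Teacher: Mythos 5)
The paper itself contains no proof of this statement: Theorem~\ref{main0} is quoted verbatim from \cite[Theorem A]{calisoalternated2}, so the only internal evidence of how the original argument runs is the companion Theorem~\ref{main1}, whose parameters (the order $N$ of $d$ modulo $p$, the valuations $v_p(d-1)$, $v_p(\kappa)$ and $v_2(d^2-1)$) show that the cited proof is organised by exactly the case division you propose. Your plan is sound: the congruence $(d^n-1)x+\kappa S_n\equiv 0\ (\mathrm{mod}\ p^s)$, the collapse to a single fixed point when $p\mid d$, the translation case $d=1$, the conjugation to $y\mapsto dy$ whenever a fixed point exists, and the lifting-the-exponent evaluation of $\mathrm{ord}_{p^k}(d)=N p^{\max(0,k-t)}$ produce precisely the two families of sets, and your converse constructions ($d=1$, $\kappa=p^{s-\alpha}$ for the singletons; $\kappa=0$ with $\mathrm{ord}_p(d)=N$ and $v_p(d^N-1)=s-\alpha$ for the other family) are the standard ones and respect the bound $\alpha\le s-\sigma(p,s)$. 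One sentence is stated too broadly: in the subcase $d\equiv 1\ (\mathrm{mod}\ p)$ with $v_p(\kappa)<v_p(d-1)$, the common period equals $p^{\,s-v_p(\kappa)}$ only when $p$ is odd or $v_2(d-1)\ge 2$; for $p=2$, $d\equiv 3\ (\mathrm{mod}\ 4)$ and $\kappa$ odd, the refined identity you quote at the end yields the period $2^{\max(1,\,s-\gamma+2)}$ with $\gamma=v_2(d^2-1)$ (compare the last table of Theorem~\ref{main1}) — still a singleton power of $2$, so the classification is unaffected, but the exponent is not $s-v_2(\kappa)$. Folding that correction into the $2$-adic bookkeeping you already flag, the plan goes through and matches the approach of the source.
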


Next result helps us to determine exactly $\Per(\phi_{d,k})$ for fixed $d,\kk\in\Z_\Delta.$
\evt

\begin{theorem}\label{main1}(\cite[Prop. 3.1, Th. C]{calisoalternated2})
 Let $\Delta$ be a positive integer, $d,\kappa\in\Z_\Delta$, let $\phi_{d,\kappa}:\Z_\Delta\to\Z_\Delta$ be
 defined by $\phi_{d,\kappa}(x)=dx+\kappa$. Then we distinguish the following cases:

\begin{enumerate}

\item  For any  $\Delta\in\N$ we have  $\Per (\phi_{0,\kappa})=\{1\}$
 and  $\Per (\phi_{1,\kappa})=\{\frac\Delta{\gcd(\Delta,\kappa)}\}$.

\item If $\Delta\geq 3$ is even then $\Per (\phi_{\Delta-1,\kappa})=\{1,2\}$ if $\kappa$ is even and
$\Per (\phi_{\Delta-1,\kappa})=\{2\}$ if $\kappa$ is odd.

\item If $\Delta\geq 3$ is odd then $\Per (\phi_{\Delta-1,\kappa})=\{1,2\}$ .

\item When $\Delta=p^s$, $p$ prime, we have:

\hspace*{-1cm}\begin{minipage}{.3\textwidth}
{\small
\begin{center}
  \begin{tabular}{|c|l|c|}
  \hline
	\multicolumn{2}{|c|}{\rojo Conditions on $d,\Delta,\kappa$}	&\rojo $\Per(\phi_{d,\kappa})$		\\
  \hline
  $\gcd(d,\Delta)=1$&\begin{tabular}{l}
		      $\gcd(d-1,\Delta)=1$\\
		      $d^N\equiv 1\modu(p^\alpha),\,\alpha\geq 1$\\
		      $d^N\not\equiv1\modu(p^{\alpha+1})$\\
		      $N$ is the order of $d$ modulo $p$
		      \end{tabular}		&              $\{1\}\cup N\cdot\{p^j\}_{j=0}^{\max\{0,s-\alpha\}}$\\ \cline{2-3}
  &\begin{tabular}{l}
		      $\gcd(d-1,\Delta)>1$\\
		      $d\equiv1\modu(p^\alpha)$,\,\, $d\not\equiv1\modu(p^{\alpha+1})$\\
		      $\kappa\equiv0\modu(p^\beta)$,\,\, $\kappa\not\equiv0\modu(p^{\beta+1})$\\
		      $1\leq \alpha<s,\;0\leq \beta<s,$\\
		      \textrm{If $p=2$ this only works when $\alpha>1$}
		      \end{tabular}		&
						$\{p^{j}\}_{j=0}^{s-\alpha}$ if $\beta\geq \alpha$\\ \cline{3-3}
 &						&            $\{p^{s-\beta}\}$ if $\beta<\alpha$\\
 \hline
  $\gcd(d,\Delta)>1$&&$\{1\}$\\
  \hline
  \end{tabular}
\end{center}
}
\end{minipage}

\item For $\Delta=2^s\geq 3$, the missing cases  are:

\hspace*{-1cm}{\small
\begin{center}
  \begin{tabular}{|c|l|c|}
  \hline
	\multicolumn{2}{|c|}{\rojo Conditions on $d,\Delta,\kappa$}	&\rojo $\Per(\phi_{d,\kappa})$		\\
  \hline
\begin{tabular}{l}

 $d\equiv 1\modu(2),\,\,\,
d\not\equiv1\modu(2^2)$\\
 $\kappa\equiv 0\modu(2^\beta),\,\,\,\kappa\not\equiv 0\modu(2^{\beta+1})$\\
 $d^2\equiv1\modu(2^\gamma),\,\,\,d^2\not\equiv1\modu(2^{\gamma+1})$\\
		$0\leq\beta<s,\;\gamma\geq 3$\\

\end{tabular}
  &\begin{tabular}{l}
				$\beta=0$
		      \end{tabular}		&
						$\{2\}$ if $s\leq \gamma-1$\\ \cline{3-3}
 &						&$\{2^{s-\gamma+2}\}$ if
 $s> \gamma-1$\\
\cline{2-3}
  &\begin{tabular}{l}
		$\beta\geq 1$
		      \end{tabular}		&
		       $\{2^j\}_{j=0}^{\max\{1,s-\gamma+1\}}$ \\
 \hline
  \end{tabular}
\end{center}
}

\end{enumerate}
\end{theorem}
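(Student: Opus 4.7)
The plan is to reduce every case to an analysis of the closed-form iterate
$$\phi_{d,\kappa}^n(x) \,=\, d^n x + \kappa\, S_n, \qquad S_n := 1 + d + \cdots + d^{n-1},$$
so that $x$ has period dividing $n$ exactly when $(d^n-1)x + \kappa S_n \equiv 0\pmod{\Delta}$. The entire structure of $\Per(\phi_{d,\kappa})$ is then dictated by the $p$-adic valuations of $d^n-1$ and of $\kappa S_n$, to which I would apply the Lifting-the-Exponent (LTE) lemma. I would dispatch items (1)--(3) by direct computation: for $d=0$ the map is constant; for $d=1$ it is the translation $x\mapsto x+\kappa$ with common period $\Delta/\gcd(\Delta,\kappa)$; for $d\equiv -1\pmod\Delta$ one has $\phi^2=\mathrm{id}$ and a fixed point exists iff $2x\equiv\kappa\pmod\Delta$ has a solution, which explains the parity split. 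Then, in (4) with $\Delta=p^s$, the sub-case $\gcd(d,p)>1$ is handled by noting that $d^n\equiv 0\pmod{p^s}$ for $n\geq s$, so $\phi^n$ becomes a constant and every orbit is absorbed by the unique fixed point $\kappa/(1-d)$ (a unit inverse), yielding $\Per(\phi_{d,\kappa})=\{1\}$.

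In the remaining sub-cases of (4) where $\gcd(d,p)=1$, I would translate a fixed point to the origin when possible. If $\gcd(d-1,p)=1$, the change of variable $y=x-\kappa/(1-d)$ gives $\phi^n(x)-x^*=d^n y$, reducing the period of $x$ to the order of $d$ in $\Z_{p^{s-v}}^{\times}$ with $v=v_p(y)$; combined with $\mathrm{ord}_{p^k}(d)=N$ for $k\leq\alpha$ and $Np^{k-\alpha}$ for $k>\alpha$ (a consequence of LTE), this sweeps out $\{1\}\cup N\{p^j\}_{j=0}^{\max\{0,s-\alpha\}}$ as $v$ runs through $\{0,1,\dots,s\}$. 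If instead $\gcd(d-1,p)>1$ with LTE valid (that is, $p$ odd or $\alpha\geq 2$), then $v_p(d^n-1)=\alpha+v_p(n)$ and $v_p(S_n)=v_p(n)$. When $\beta\geq\alpha$, the shift $y=x+\kappa/(d-1)$ is integral and the period equation becomes $(d^n-1)y\equiv 0\pmod{p^s}$, giving period $p^{\max\{0,s-\alpha-v_p(y)\}}$ and image $\{p^j\}_{j=0}^{s-\alpha}$. When $\beta<\alpha$, the valuations $v_p(\kappa S_n)=\beta+v_p(n)$ and $v_p((d^n-1)x)\geq\alpha+v_p(n)$ are strictly ordered for every $x$, so the sum has valuation exactly $\beta+v_p(n)$ and every point has the same period $p^{s-\beta}$.

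The hardest part will be item (5), where $p=2$ and $\alpha=1$: LTE applied directly to $d$ fails, so I would replace $d$ by $d^2$ (for which $v_2(d^2-1)=\gamma\geq 3$ makes LTE available) and first rule out odd periods, observing that for odd $n$ one has $v_2((d^n-1)x)\geq 1$ while $v_2(\kappa S_n)=\beta$, so the residues cannot cancel to raise the valuation to $s$ except trivially. Every period is therefore even, and writing $n=2m$ reduces the problem to the framework of (4) with $\gamma-1$ playing the role of $\alpha$. The delicate bookkeeping is then the split $\beta=0$ versus $\beta\geq 1$ together with the regime $s\leq \gamma-1$, where the map behaves almost like an involution and forces $\Per(\phi_{d,\kappa})=\{2\}$ rather than producing a longer tower. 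In each sub-case, one must additionally verify the realization claim by exhibiting an $x\in\Z_{2^s}$ with an appropriate value of $v_2(x)$ (or $v_2(y)$ after shifting) attaining each listed period; this combinatorial check, rather than the LTE computation itself, is what I expect to occupy most of the work.
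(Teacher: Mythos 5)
This theorem is stated in the paper without proof: it is imported verbatim from \cite[Prop.~3.1, Th.~C]{calisoalternated2}, so there is no in-paper argument to compare against. (The paper's own new machinery for odd periods, in Section~\ref{S:preliminar}, deliberately bypasses $\phi_{d,\kappa}$ and works instead with the set $C$ and the quantities $\varepsilon_{m,\kappa}$ of Lemma~\ref{L:mnew}; that is a congruence/valuation analysis of the same flavour as yours, but organized around the lifted orbit rather than around $\Per(\phi_{d,\kappa})$.) Judged on its own, your plan --- closed form $\phi^n(x)=d^nx+\kappa S_n$, reduction to $(d^n-1)x+\kappa S_n\equiv 0$, conjugation to $y\mapsto dy$ when a fixed point or the shift $\kappa/(d-1)$ is available, LTE for the valuations $v_p(d^n-1)=\alpha+v_p(n)$ and $v_p(S_n)=v_p(n)$, and passage to $\phi^2$ when $p=2$ and $v_2(d-1)=1$ --- is the standard and correct route to all five items, and your case analysis in (1)--(4) checks out (note that in (4) the sub-case $\gcd(d-1,\Delta)=1$ is vacuous for $p=2$, since $d$ odd forces $d-1$ even, so LTE is available without caveat there).

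There is one concrete slip in your treatment of item (5). You claim that after ruling out odd periods ``every period is therefore even''; this is false when $\beta\geq 1$, and the table itself records $2^0=1$ in $\Per(\phi_{d,\kappa})=\{2^j\}_{j=0}^{\max\{1,s-\gamma+1\}}$ for that sub-case: since $v_2(d-1)=1\leq\beta$, the congruence $(d-1)x+\kappa\equiv 0 \pmod{2^s}$ is solvable and fixed points exist. Moreover the valuation argument you give to exclude odd periods (``$v_2((d^n-1)x)\geq 1$ while $v_2(\kappa S_n)=\beta$, so the residues cannot cancel'') only works for $\beta=0$; for $\beta\geq 1$ the two valuations can coincide and cancellation does occur --- that is precisely how the fixed points arise. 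The correct statement is that no \emph{odd period greater than $1$} occurs, and the clean way to see it is a counting argument rather than a pure valuation one: for odd $n$ one has $\gcd(d^n-1,2^s)=2$, so the solution set of $\phi^n(x)=x$ is either empty or a coset of size $2$; since it contains the fixed-point set, which already has size $2$ (or is empty exactly when the $n$-periodic set is), no points of exact odd period $n>1$ exist. With that repair, the reduction of the even periods to $\phi^2=\phi_{d^2,\kappa(1+d)}$, where $v_2(d^2-1)=\gamma$ and $v_2(\kappa(1+d))=\beta+\gamma-1$ place you back in item (4), together with the relation ``$\phi$-period $=2\times(\phi^2$-period$)$ unless the $\phi$-period is $1$'', does yield the three entries of the last table; as you say, the remaining work is verifying that each listed period is actually attained by exhibiting a point with the right $2$-adic valuation.
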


Finally, in the general case  we have a prime decomposition $\Delta =p_{1}^{s_{1}}p_{2}^{s_{2}}\dots p_k^{s_k}$. Let
$\phi_{d,\kk}:\Z_\Delta\to\Z_\DD$ be defined by $\phi_{d,\kk}(x)=dx+\kk$ and  take
$
d_i\equiv d\modu(p_i^{s_i}),\;
\kk_i\equiv \kk\modu(p_i^{s_i}),\;
\phi_{d_i,\kk_i}:\Zpi\to\Zpi,$ $i=1,2,\ldots,k.$

Using the Chinese Remainder Theorem, it can be proved:

\evt
\begin{theorem}\label{main4}
Let $\Delta =p_{1}^{s_{1}}p_{2}^{s_{2}}...p_{k}^{s_{k}}$ be a decomposition into prime factors.
Then, $n\in \mathrm{Per}(\phi _{d,\kappa })
$ if and only if $n=\mathrm{lcm}(n_{1},n_{2},...,n_{k})$ for $n_{i}\in \mathrm{%
Per}(\phi _{d,\kappa _{i}})$.
\end{theorem}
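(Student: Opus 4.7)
The plan is to apply the Chinese Remainder Theorem to split $\Z_{\Delta}$ as a direct product and then exploit the fact that the period of a point under a product map is the least common multiple of the coordinate periods.

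First, since $p_1^{s_1},\ldots,p_k^{s_k}$ are pairwise coprime, the simultaneous reduction map
\[
\psi:\Z_{\Delta}\longrightarrow \Z_{p_1^{s_1}}\times\cdots\times\Z_{p_k^{s_k}},\qquad \psi(x)=(x\modu p_1^{s_1},\ldots,x\modu p_k^{s_k}),
\]
is a bijection (indeed a ring isomorphism). Next I would verify that $\psi$ conjugates $\phi_{d,\kappa}$ with the product map $\Phi:=\phi_{d_1,\kappa_1}\times\cdots\times\phi_{d_k,\kappa_k}$. This reduces to the elementary check that for every $i$ and every $x\in\Z_\Delta$,
\[
(dx+\kappa)\modu p_i^{s_i}=d_i\,(x\modu p_i^{s_i})+\kappa_i \modu p_i^{s_i},
\]
which is immediate from $d\equiv d_i$ and $\kappa\equiv\kappa_i\modu p_i^{s_i}$. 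Hence $\psi\circ\phi_{d,\kappa}^{\,n}=\Phi^n\circ\psi$ for every $n\geq 1$.

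In particular, $x$ is $n$-periodic for $\phi_{d,\kappa}$ if and only if $\psi(x)=(y_1,\ldots,y_k)$ is $n$-periodic for $\Phi$. Because $\Phi$ acts coordinatewise, $\Phi^{n}(\psi(x))=\psi(x)$ is equivalent to $\phi_{d_i,\kappa_i}^{\,n}(y_i)=y_i$ for every $i$, i.e.\ to $n_i\mid n$ for every $i$, where $n_i$ is the period of $y_i$ under $\phi_{d_i,\kappa_i}$. The smallest such $n$ is precisely $\lcm(n_1,\ldots,n_k)$, so the period of $x$ equals $\lcm(n_1,\ldots,n_k)$ with $n_i\in \Per(\phi_{d_i,\kappa_i})$.

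Both directions of the stated equivalence then follow formally. If $n\in\Per(\phi_{d,\kappa})$, pick an $n$-periodic $x$, let $n_i$ be the period of its reduction $x\modu p_i^{s_i}$, and the previous paragraph gives $n=\lcm(n_1,\ldots,n_k)$ with $n_i\in\Per(\phi_{d_i,\kappa_i})$. Conversely, given $n_i\in\Per(\phi_{d_i,\kappa_i})$ with witnesses $y_i\in\Z_{p_i^{s_i}}$, let $x:=\psi^{-1}(y_1,\ldots,y_k)$; then $x$ has period $\lcm(n_1,\ldots,n_k)$ under $\phi_{d,\kappa}$, so this lcm lies in $\Per(\phi_{d,\kappa})$. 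The only delicate points are the compatibility of $\psi$ with the affine map and the product-period lemma, both of which are essentially formalities once CRT is in place; I do not anticipate a genuine obstacle beyond recording these verifications cleanly.
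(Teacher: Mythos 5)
Your argument is correct and follows exactly the route the paper intends: the paper offers no written proof beyond the remark ``Using the Chinese Remainder Theorem, it can be proved,'' and your conjugation of $\phi_{d,\kappa}$ with the product map $\phi_{d_1,\kappa_1}\times\cdots\times\phi_{d_k,\kappa_k}$ together with the lcm-of-coordinate-periods observation is the standard way to fill in that sketch. The only cosmetic point is that the theorem's notation $\Per(\phi_{d,\kappa_i})$ should be read as $\Per(\phi_{d_i,\kappa_i})$ with $d_i\equiv d\modu(p_i^{s_i})$, which is the reading you correctly adopt.
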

This result allows us to obtain a precise description of the set of periods in the general case
$\Delta =p_{1}^{s_{1}}p_{2}^{s_{2}}\dots p_k^{s_k}$. Since the description is something cumbersome,
the reader is referred to \cite[Theorem D]{calisoalternated2} to see all the details of such a general
description of $\mathrm{Per}[f,g]$.

\section{Technical results. Determination of odd periods}\label{S:preliminar}

To study the question of determining the set of odd periods of $[f,g]$ we need  to link periodic points of the
circle maps $f, g$ with ``lifted cycles" of their corresponding liftings $F, G$. In this direction,
we write $[F,G]$ to denote the set of sequences $(x_n)\subset \mathbb{R}$ obtained by applying $F$ and $G$ in an
alternated form.
A point $z\in\S^1$ is said to be a \emph{shared periodic point of $f$ and $g$ of order or period $m$} if and only if it is an $m$-periodic point of $f$ and $g$ whose orbit is shared by both maps,
that is, $f^i(z)=g^i(z)$ for any integer $i\geq 0$. Note that $f^m(z)=g^m(z)=z$ and $f^i(z)=g^i(z)\not=z$ for any $i\in\{1,2,\dots, m-1\}$.
We recall some interesting results we will need later for the determination of odd periods.

\begin{lemma}\label{L:ciclosf_ciclosF}
Let $f\in C(\mathbb{S}^1)$ and let $F:\mathbb{R}\rightarrow\mathbb{R}$ be an associate lifting of $f$. Then
$z\in\mathbb{S}$ is a periodic point of $f$ of order $p\geq 1$ if and only if
the orbit of $x$, with $e(x)=z$, under $F$ is a lifted cycle of period $p$, that is, $F^{p}(x)-x\in\mathbb{Z}$ and
$F^{s}(x)-x\notin\mathbb{Z}$ for all $0<s<p$.
\end{lemma}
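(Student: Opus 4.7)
The plan is to bootstrap everything from the defining property of the lifting $F$, namely $e\circ F=f\circ e$, extended to iterates and then read via the kernel $\ker e = \mathbb{Z}$.

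First I would verify by a straightforward induction on $n$ that $e\circ F^{n}=f^{n}\circ e$ for every $n\geq 0$. The base case $n=0$ is trivial (identities), and for the inductive step one writes $e\circ F^{n+1}=(e\circ F)\circ F^{n}=(f\circ e)\circ F^{n}=f\circ (e\circ F^{n})=f\circ f^{n}\circ e=f^{n+1}\circ e$. Here it is important to use the extended lifting $\widetilde F:\mathbb{R}\to\mathbb{R}$ defined in the introduction, and to note that it still satisfies $e\circ\widetilde F=f\circ e$ on all of $\mathbb{R}$ (this is immediate from $e(x+k)=e(x)$ for $k\in\mathbb{Z}$ and from $\deg(f)$ being an integer).

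Next I would exploit the fact that $e(a)=e(b)$ is equivalent to $a-b\in\mathbb{Z}$. Combined with the previous step, for any $n\geq 1$ and any $x\in\mathbb{R}$ with $e(x)=z$, we get
\[
f^{n}(z)=z \iff e(F^{n}(x))=e(x) \iff F^{n}(x)-x\in\mathbb{Z}.
\]
This is the single identity from which both directions of the lemma follow.

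To finish, I would translate the notion of ``minimal $n$'' on each side. By definition, $z$ is a periodic point of $f$ of order $p$ exactly when $p$ is the smallest positive integer with $f^{p}(z)=z$; by the equivalence just established, this is the same as $p$ being the smallest positive integer with $F^{p}(x)-x\in\mathbb{Z}$, which is precisely the definition of a lifted cycle of period $p$ for the orbit of $x$ under $F$. Both implications of the lemma are obtained simultaneously from this minimality argument.

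Nothing here is really a substantive obstacle; the only pitfall worth flagging is that the lifting is a priori defined only on $[0,1]$, so one must be mindful that the extension $\widetilde F$ remains a genuine lifting globally, which is what allows the iteration $F^{n}(x)$ to make sense for arbitrary $x\in\mathbb{R}$ (and, in particular, for the images $F(x), F^{2}(x),\dots$ that leave $[0,1]$).
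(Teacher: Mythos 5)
Your argument is correct: the identity $e\circ F^{n}=f^{n}\circ e$ (valid for the extended lifting $\widetilde F$, as you check), combined with $e(a)=e(b)\iff a-b\in\mathbb{Z}$, gives $f^{n}(z)=z\iff F^{n}(x)-x\in\mathbb{Z}$, and the minimality translation then yields both directions at once. The paper gives no proof of its own here --- it simply cites Sections 3.2 and 3.3 of Alsed\`a--Llibre--Misiurewicz --- and what you have written is precisely the standard argument that reference contains, so there is nothing to flag beyond the care you already take with the global extension of the lifting.
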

\begin{proof}
 See \cite[Sections 3.2 and 3.3]{alseda}.
\end{proof}

By applying the  definition of shared periodic point and the previous lemma we obtain immediately:

\begin{proposition}\label{puntocompartido}
 Let $f,g\in C(\S^1)$ be with liftings $F$ and $G$, respectively. Then $z\in\S^1$ is a shared  periodic point of $f$ and $g$ of period $m$ if and only if for any $w\in e^{-1}(z)$ it holds $F^j(w)-G^j(w)\in\Z$ for all $j\geq1$, $F^m(w)-w\in \Z$ and $F^r(w)-w\not\in\Z$ if $0<r<m.$
\end{proposition}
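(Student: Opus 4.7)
The plan is to exploit the basic covering-map identity $e(x)=e(y) \iff x-y \in \Z$, together with the commuting square $e\circ F = f\circ e$ (and its analog $e\circ G = g\circ e$), to translate each of the three clauses in the definition of a shared periodic point into the corresponding condition on the liftings. The argument is a direct reading of definitions through the universal cover $e$, so I expect no serious obstacle.

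First I would fix an arbitrary $w\in e^{-1}(z)$ and observe, by induction on $j\ge 0$, the identities $e(F^{j}(w)) = f^{j}(z)$ and $e(G^{j}(w)) = g^{j}(z)$. For book-keeping I would also remark that the three conditions in the statement do not actually depend on the choice of $w\in e^{-1}(z)$: two preimages of $z$ differ by an integer, and the extended liftings satisfy $F(x+n)=F(x)+n\deg(f)$ (similarly for $G$), so adding an integer to $w$ only modifies each $F^{j}(w)$ and $G^{j}(w)$ by an integer.

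For the forward direction, assume $z$ is a shared periodic point of $f$ and $g$ of period $m$. The shared-orbit condition $f^{j}(z)=g^{j}(z)$ for $j\ge 1$ translates, via the identities above, to $e(F^{j}(w)) = e(G^{j}(w))$, i.e.\ $F^{j}(w)-G^{j}(w)\in\Z$ for every $j\ge 1$. The remaining conditions $F^{m}(w)-w\in\Z$ and $F^{r}(w)-w\notin\Z$ for $0<r<m$ are precisely the characterization of $m$-periodicity of $z$ under $f$ provided by Lemma~\ref{L:ciclosf_ciclosF}.

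For the converse, assume the three conditions on the liftings hold. Applying Lemma~\ref{L:ciclosf_ciclosF} to $F$ yields that $z$ is an $m$-periodic point of $f$. From $F^{j}(w)-G^{j}(w)\in\Z$ for $j\ge 1$ I obtain $e(F^{j}(w))=e(G^{j}(w))$, that is, $f^{j}(z)=g^{j}(z)$ for all $j\ge 1$ (and trivially for $j=0$), so the orbits of $z$ under $f$ and $g$ coincide. In particular $g^{m}(z)=f^{m}(z)=z$ and $g^{r}(z)=f^{r}(z)\neq z$ for $0<r<m$, which shows that $z$ is also $m$-periodic for $g$. Hence $z$ meets all the requirements to be a shared periodic point of $f$ and $g$ of period $m$, completing the equivalence.
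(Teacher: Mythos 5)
Your proof is correct and follows exactly the route the paper intends: the paper states the proposition as an immediate consequence of the definition of shared periodic point together with Lemma~\ref{L:ciclosf_ciclosF}, and your argument simply writes out that translation through the covering map. The only addition is your (correct) remark that the conditions are independent of the choice of $w\in e^{-1}(z)$, which the paper leaves implicit.
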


In order to investigate what points from the circle generate periodic orbits of odd period, we need the following lemma stated 
in \cite[Lemma 6-(6)]{calisoalternated1}:

\begin{lemma}\label{L:Char_lift}\label{lemlinerosalvalifting}
Let $f, g\in C(\mathbb{S}^1)$ be with associate liftings $F,G:\mathbb{R}\rightarrow\mathbb{R}$, respectively. Let $(z_n)\subset \mathbb{S}^1$ be an orbit in the alternating system $[f,g]$,  let $x_0\in\mathbb{R}$ be such that $e(x_0)=z_0$ and let  $m$ be an odd number. Then $(z_n)\subset \mathbb{S}^1$ is a periodic sequence of period $m$ if and only if the following conditions hold:
\begin{enumerate}
 \item
$
(G\circ F)^{m}(x_0)-x_0\in\mathbb{Z},\, (G\circ F)^{s}(x_0)-x_0\notin\mathbb{Z}, \, 0< s <m.
$
\item
$
F^{m}(x_0)-x_0\in\mathbb{Z},\, F^{s}(x_0)-x_0\notin\mathbb{Z},\, 0< s <m.$
\item $G^{m}(x_0)-x_0\in\mathbb{Z},\, G^{s}(x_0)-x_0\notin\mathbb{Z},\, 0< s <m.$
\item
$
F^{i}(x_0)-G^{i}(x_0)\in \mathbb{Z}, \text{ for all } i\geq 1.
$
\end{enumerate}
\end{lemma}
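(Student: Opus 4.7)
The plan is to translate each of the four conditions on the liftings into a statement about the circle dynamics via the universal cover $e$, and then to invoke Lemma \ref{L:ciclosf_ciclosF} to shuttle between circle periodicity and lifted-cycle conditions. The oddness of $m$ is the decisive ingredient throughout.

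For the forward implication, I would first unfold the alternating system to obtain $z_{2k}=(g\circ f)^{k}(z_{0})$ and $z_{2k+1}=f\circ(g\circ f)^{k}(z_{0})$. Assuming $(z_{n})$ has minimal period $m$, I compare for each $j\geq 0$ the two ways to compute $z_{m+j+1}$: starting from $z_{m+j}=z_{j}$ one applies $f$ or $g$ according to the parity of $m+j$, while starting from $z_{j}$ one applies $f$ or $g$ according to the parity of $j$. Since $m$ is odd these parities are opposite, so both $f(z_{j})$ and $g(z_{j})$ appear as values of $z_{j+1}$, forcing $f(z_{j})=g(z_{j})$ for every $j$. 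An easy induction yields $f^{j}(z_{0})=g^{j}(z_{0})$ for all $j\geq 0$, which lifts via $e$ to condition (4). The same shared-orbit identity makes $z_{0}$ a periodic point of $f$ (and of $g$) of exact period $m$, and Lemma \ref{L:ciclosf_ciclosF} applied to each yields (2) and (3). For (1), I observe that $(g\circ f)^{k}(z_{0})=f^{2k}(z_{0})$ along the shared orbit, so $(g\circ f)^{m}(z_{0})=z_{0}$; if $(g\circ f)^{s}(z_{0})=z_{0}$ held for some $0<s<m$, then $m\mid 2s$, and since $\gcd(m,2)=1$ this forces $m\mid s$, which is impossible. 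Hence $z_{0}$ has $(g\circ f)$-period exactly $m$, and Lemma \ref{L:ciclosf_ciclosF} applied to $g\circ f$ with lifting $G\circ F$ gives (1).

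For the converse, from (2) and Lemma \ref{L:ciclosf_ciclosF} the point $z_{0}$ is $f$-periodic of exact period $m$, and projecting (4) through $e$ yields $f^{i}(z_{0})=g^{i}(z_{0})$ for all $i\geq 1$. I would next show by induction that $z_{n}=f^{n}(z_{0})$ for every $n\geq 0$: the base case is clear, and in the inductive step either $z_{n+1}=f(z_{n})=f^{n+1}(z_{0})$ when $n$ is even, or $z_{n+1}=g(z_{n})=g(f^{n}(z_{0}))=g(g^{n}(z_{0}))=g^{n+1}(z_{0})=f^{n+1}(z_{0})$ when $n$ is odd, invoking (4) in the $g$-equalities. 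Since $z_{0}$ has $f$-period exactly $m$, one reads off $z_{m}=z_{0}$ and $z_{s}\neq z_{0}$ for $0<s<m$, so $(z_{n})$ has minimal period $m$.

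The hard part will be the derivation of the shared-orbit identity in the forward direction, because oddness of $m$ is indispensable there: the parity flip caused by the shift $m$ is what forces $f(z_{j})$ and $g(z_{j})$ to coincide at every orbit point, whereas for even $m$ both computations of $z_{m+j+1}$ use the same map and yield only a tautology. The same hypothesis $\gcd(m,2)=1$ returns to secure the minimality in condition (1). In effect, (2) together with (4) already suffices for the converse, while (1) and (3) follow from the shared-orbit identity combined with the exact $f$-period; the lemma packages all four because each is independently useful in the algorithm developed in Section \ref{S:algorithm}.
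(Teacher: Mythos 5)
Your argument is correct and self-contained. Note, however, that the paper does not actually prove this lemma: it is imported verbatim from \cite[Lemma 6-(6)]{calisoalternated1}, so there is no in-text proof to compare against. Judged on its own terms, your proof is sound: the parity-flip argument (comparing the two computations of $z_{m+j+1}$, which use opposite maps precisely because $m$ is odd) correctly forces $f(z_j)=g(z_j)$ along the orbit, the induction $z_n=f^n(z_0)=g^n(z_0)$ is valid, and the translation of exact periods of $z_0$ under $f$, $g$ and $g\circ f$ into conditions (1)--(3) via Lemma~\ref{L:ciclosf_ciclosF} (applied to $g\circ f$ with lifting $G\circ F$) is legitimate; the $\gcd(m,2)=1$ step ruling out a smaller $(g\circ f)$-period is also right. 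One point worth making explicit, which you only use implicitly: minimality of the period of the \emph{sequence} $(z_n)$ does not a priori give $z_s\neq z_0$ for $0<s<m$, but once $z_n=f^n(z_0)$ is established, any relation $f^s(z_0)=z_0$ propagates to $z_{n+s}=z_n$ for all $n$, so the two notions of minimal period coincide; your write-up would benefit from a sentence saying so. Your closing observation that (2) and (4) alone suffice for the converse is accurate and consistent with the statement being an equivalence with the conjunction of all four conditions.
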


From now on, our continuous circle maps $f$ and $g$ have liftings
$$F(x)=d_1x+\alpha\qquad\textrm{ and } \qquad G(x)=d_2x+\beta,$$ with $d_1\neq d_2$ and we will
be concerned only with $\kk\in\Z$, see Eq~(\ref{E:kappa}),
since it implies that $f$ and $g$ commutes, a necessary condition to obtain odd periods of
$\sisalter$. For more details, the reader is referred to \cite[Section~5]{calisoalternated1}.

In next result  we sketch the key conditions to obtain points generating periodic orbits of odd period,
see \cite[Lemma 16]{calisoalternated1}.

\begin{lemma}[The set $C$]  \label{L:Candi}
Let
\begin{equation}
C:=\left\{\frac{m+\beta-\alpha}{d_1-d_2}: m\in\mathbb{Z}\right\}. \label{conjuntoC}
\end{equation}
Then the  following properties hold:
\begin{enumerate}
 \item $C+\Z=C$.
 \item $F(C)\subseteq C$ and $G(C)\subseteq C$.
 \item If $z\in \S^1$ is a periodic point shared by $f$ and $g$, with odd period, and $z=e(x)$, then
$x=\frac{m+\beta -\alpha}{d_1-d_2}\in C,$ for some $m\in \mathbb Z.$
\item $F^s(x)-G^s(x)\in\mathbb{Z}$ for all $x\in C$ and for all integer $s\geq 1.$
\end{enumerate}
\end{lemma}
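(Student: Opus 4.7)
The plan is to verify the four parts in the order listed, since each subsequent part uses the previous ones. The whole proof rests on the observation that $\kappa = (d_1-1)\beta - (d_2-1)\alpha$ is an integer by hypothesis, which is what makes the denominator $d_1-d_2$ compatible with the affine actions.

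First I would dispose of part (1) by a one-line check: if $x=(m+\beta-\alpha)/(d_1-d_2) \in C$ and $k\in\mathbb{Z}$, then $x+k = (m+k(d_1-d_2)+\beta-\alpha)/(d_1-d_2)$, and since $m+k(d_1-d_2)\in\mathbb{Z}$, this lies in $C$. The reverse inclusion is obvious.

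For part (2), I would compute directly. With $x=(m+\beta-\alpha)/(d_1-d_2)$,
\[
F(x) = d_1 x + \alpha = \frac{d_1 m + d_1\beta - d_1\alpha + \alpha(d_1-d_2)}{d_1-d_2} = \frac{d_1 m + \kappa + \beta - \alpha}{d_1-d_2},
\]
where in the last step I regrouped $d_1\beta - d_2\alpha$ as $\kappa + (\beta-\alpha)$. Since $\kappa\in\mathbb{Z}$, the integer $m'=d_1 m+\kappa$ witnesses $F(x)\in C$. The computation for $G$ is identical in shape, yielding $m'=d_2 m+\kappa\in\mathbb{Z}$.

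For part (3), I would invoke Proposition~\ref{puntocompartido}: if $z=e(x)$ is a shared periodic point of $f,g$, then $F^j(w)-G^j(w)\in\mathbb{Z}$ for every $w\in e^{-1}(z)$ and every $j\ge 1$. Taking $j=1$ with $w=x$ gives $F(x)-G(x)=(d_1-d_2)x+(\alpha-\beta)\in\mathbb{Z}$; solving for $x$ produces the representation $x=(m+\beta-\alpha)/(d_1-d_2)\in C$ for some $m\in\mathbb{Z}$. Notice that oddness of the period is not actually needed here—only the shared-orbit property is used—but the statement is tailored to how it will be applied.

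Finally, for part (4), I would argue by induction on $s$. The base case $s=1$ is the computation already done in (3): for $x=(m+\beta-\alpha)/(d_1-d_2)\in C$, $F(x)-G(x)=m\in\mathbb{Z}$. For the inductive step, assuming $F^{s-1}(y)-G^{s-1}(y)\in\mathbb{Z}$ for every $y\in C$, I would split
\[
F^s(x)-G^s(x) = \bigl[F(F^{s-1}(x))-G(F^{s-1}(x))\bigr] + \bigl[G(F^{s-1}(x))-G(G^{s-1}(x))\bigr].
\]
By part (2), $F^{s-1}(x)\in C$, so the first bracket is in $\mathbb{Z}$ by the base case. The second bracket equals $d_2\,(F^{s-1}(x)-G^{s-1}(x))$, which is an integer multiple of $d_2$ by the induction hypothesis, hence in $\mathbb{Z}$.

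I don't anticipate a serious obstacle: the only delicate point is spotting the regrouping $d_1\beta-d_2\alpha = \kappa+(\beta-\alpha)$ that makes (2) work, and recognising in (4) that the $F$-invariance of $C$ lets the base case be reapplied inside the induction.
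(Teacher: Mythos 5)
Your proof is correct: the key observation $d_1\beta - d_2\alpha = \kappa + (\beta-\alpha)$ with $\kappa\in\mathbb{Z}$ (the standing hypothesis at this point in the paper) makes (2) work, (3) follows from Proposition~\ref{puntocompartido} with $j=1$ exactly as you say, and the telescoping induction in (4) is sound since $F^{s-1}(x)\in C$ by (2). The paper itself gives no proof here --- it only cites Lemma~16 of \cite{calisoalternated1} --- so there is nothing to compare against, but your argument is the natural direct one and has no gaps.
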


It is immediate to realize that

\begin{equation}
e(C):=\{e(x):x\in C\}=\left\{e^{2\pi i x}:x=\frac{m+\beta-\alpha}{d_1-d_2},
0\leq  m< |d_1-d_2|\right\}\label{conjuntoeC}
\end{equation}
is finite, and, by Lemma~\ref{L:Candi},  if $z_0\in e(C)$ and $(z_n)$ is the orbit
generated by $\sisalter$ then $z_n\in e(C).$ 

\begin{lemma}\label{L:RCandi} Under the above conditions, if $z_0\in e(C)$, then
its corresponding orbit $(z_n)$ in $\sisalter$ is either periodic or eventually periodic, that is, there exist
 $p\in\N$ and $n_0\geq 0$ such that $z_{n+p}=z_n$ for all $n\geq n_0$.
Additionally, $z_{n_0}$ is a shared periodic point of $f$ and $g$ of order or period $p.$
\end{lemma}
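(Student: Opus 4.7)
The plan is to reduce the alternating dynamics on $e(C)$ to an autonomous $f$-orbit on a finite set, after which the conclusion follows by a pigeonhole argument.

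First, I would combine items (2) and (4) of \lemref{L:Candi}. Part (2) gives $F(C)\subseteq C$ and $G(C)\subseteq C$, so $e(C)$ is forward-invariant under both $f$ and $g$, and hence under $\sisalter$; by Eq.~\eqref{conjuntoeC} it is a finite subset of $\S^1$. Part (4) specialized to $s=1$ says that $F(x)-G(x)\in\Z$ for every $x\in C$, which projects via $e$ to $f(z)=g(z)$ for every $z\in e(C)$. Consequently, on $e(C)$ the alternating orbit coincides with the $f$-orbit: $z_n=f^n(z_0)$ for all $n\geq 0$.

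With this reduction in hand, the result is essentially immediate. Since $\{z_n\}_{n\geq 0}\subseteq e(C)$ and $e(C)$ is finite, there exist indices $0\leq i<j$ with $z_i=z_j$; choosing the smallest such $i$ and the minimal $p:=j-i\geq 1$, we obtain $n_0:=i$ and $p$ for which $z_{n+p}=z_n$ whenever $n\geq n_0$. Hence $z_{n_0}$ is $f$-periodic of period $p$, and since $z_{n_0}\in e(C)$, \lemref{L:Candi}(4) further yields $f^s(z_{n_0})=g^s(z_{n_0})$ for every $s\geq 1$; in particular $g^p(z_{n_0})=z_{n_0}$ and $g^s(z_{n_0})\neq z_{n_0}$ for $0<s<p$. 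Therefore $z_{n_0}$ is a shared periodic point of $f$ and $g$ of period $p$.

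I do not foresee any serious obstacle. The only step that deserves a small verification is the identification ``alternating orbit $=$ $f$-orbit on $e(C)$'', which rests on the $s=1$ case of \lemref{L:Candi}(4) together with the forward invariance of $e(C)$; once this is established, everything else is routine finiteness and minimality reasoning.
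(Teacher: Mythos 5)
Your proof is correct and follows essentially the same route as the paper: eventual periodicity from the finiteness of $e(C)$ together with its forward invariance (Lemma~\ref{L:Candi}(2)), and the ``shared'' property from Lemma~\ref{L:Candi}(4). Your explicit reduction of the alternating orbit to the autonomous $f$-orbit on $e(C)$ (via the $s=1$ case of part (4)) is a clean way of making precise what the paper's very terse proof leaves implicit.
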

 \begin{proof}
The (eventual) periodicity of $z_0$ is guaranteed by the finiteness of $e(C)$ and Lemma~\ref{L:Candi}. Concerning the fact
of being a shared periodic point of $f$ and $g$, the last part of Lemma~\ref{L:Candi} implies that $f^s(z)=g^s(z)$ for all $s\geq 1$.
\end{proof}

Next result establishes an easy way to determine the period of a periodic point in $C$ that will be useful
in the performance of the algorithm. Remember that, given an initial condition $x_0$, by $(x_j)_j$ we denote
the lifted orbit under $[F,G]$ generated by $x_0$ ($x_1=F(x_0), x_2=G(x_1), \ldots$), and that we say that
$x_0$ is a periodic point in $[F,G]$ whenever $x_{n+q}\equiv x_n$ for all $n\geq 0$ and some $q\in\N$.

\begin{lemma}\label{L:eventOrperC}
Let $x_0\in C$ and let $\left(x_n\right)$ be the corresponding orbit in $[F,G]$.
\begin{itemize}
\item[(a)] If $x_n-x_0\in\mathbb Z$ for some $n\geq 1$, then the orbit of $z_0=e(x_0)$ in $[f,g]$ is periodic.
\item[(b)] If $x_0\in C$ is a periodic point in $[F,G]$, then its period is equal to the  smallest positive integer $N$
such that $$x_N-x_0\in \mathbb Z.$$
\end{itemize}
\end{lemma}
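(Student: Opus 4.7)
The plan is to leverage two structural properties of the set $C$ from \lemref{L:Candi}: $C$ is forward--invariant under both $F$ and $G$ (part (2)), and for every $x\in C$ one has $F^{s}(x)-G^{s}(x)\in\mathbb{Z}$ for every $s\geq 1$ (part (4)). Together with $x_0\in C$, these two facts imply, by an easy induction, that the whole lifted orbit $(x_n)$ remains in $C$, and hence at every step the two maps agree on the circle, that is, $f(z_j)=g(z_j)$ for every $j\geq 0$, where $z_j:=e(x_j)$. This single observation is the mechanism that will neutralize the parity mismatch inherent in an alternating scheme and turn a mod--$\mathbb{Z}$ return of the lifted orbit into a genuine periodic return on the circle.

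For part (a) I would first note that $x_n-x_0\in\mathbb{Z}$ projects to $z_n=z_0$. Then I would prove, by induction on $j\geq 0$, that $z_{n+j}=z_j$. The base case is immediate. For the inductive step, $z_{n+j+1}$ equals $f(z_{n+j})$ or $g(z_{n+j})$ according to the parity of $n+j$, while $z_{j+1}$ is $f(z_j)$ or $g(z_j)$ according to the parity of $j$. When $n$ is even the two parities coincide and the inductive hypothesis closes the step immediately; when $n$ is odd the parities swap, but then the identity $f(z_j)=g(z_j)$ from the previous paragraph identifies the two candidates, so $z_{n+j+1}=z_{j+1}$ in either case. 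This yields $z_{n+j}=z_j$ for all $j\geq 0$, so $(z_n)$ is periodic with period dividing $n$.

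For part (b) I would combine (a) with the definition of period. Let $N$ be the smallest positive integer with $x_N-x_0\in\mathbb{Z}$; the hypothesis that $x_0$ is periodic in $[F,G]$ guarantees that such an $N$ exists. Applying (a) with $n=N$ shows that the orbit of $z_0$ in $\sisalter$ is periodic, of some period $p$, with $p\mid N$ and thus $p\leq N$. Conversely, $z_p=z_0$ forces $x_p-x_0\in\mathbb{Z}$, and the minimality of $N$ gives $N\leq p$. Therefore $p=N$, as required.

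The only genuinely delicate point is the parity discussion in (a); once the coincidence $f\equiv g$ along the orbit has been recorded, both statements reduce to routine bookkeeping.
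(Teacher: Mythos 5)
Your proof is correct and follows essentially the same route as the paper's: both arguments rest on the forward invariance of $C$ and on \lemref{L:Candi}(4), which makes $F$ and $G$ agree modulo $\mathbb{Z}$ along the orbit and thus neutralizes the parity shift, giving $x_{n+j}\equiv x_j$ for all $j$ and hence periodicity of $e(x_0)$; part (b) then follows by the same minimality comparison. Your write-up merely makes explicit the parity case distinction that the paper compresses into ``in a similar way.''
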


\begin{proof} To prove Part (a), take into account that if $x_{n}\equiv x_0,$ since $x_n\in C$ automatically
Lemma~\ref{L:Candi} ensures that $F(x_n)\equiv G(x_n)$ and therefore $F(x_n)\equiv G(x_n)\equiv F(x_0)\equiv G(x_0)$. Then
$x_{n+1}\equiv x_1$ and, in a similar way, $x_{n+j}\equiv x_j$ for all $j\geq 1.$ From here, it is easily seen that $e(x_0)$ is a periodic point in $[f,g]$, and as a direct consequence we also deduce Part (b).
\end{proof}

For a set $A\subset \R$ and $x\in \R$, we denote $x+A=\{x+a:a\in A\}$.

\begin{lemma}\label{L:Iter_mk}
Let $f,g\in C(\S^1)$ be affine maps with liftings $F(x)=d_1x+\alpha$ and $G(x)=d_2x+\beta$, with
$d_1\neq d_2$ and $\kappa \in \Z$. Let $x_0=\frac{m+\beta-\alpha}{d_1-d_2}\in C$, where $m\in\{1,\ldots,|d_1-d_2|\}.$
If $(x_n)$ is the orbit of $x_0$ generated by the alternating system $[F,G]$, then:
\begin{itemize}
 \item  If $d_2\not=1$
$$x_n \in \frac{d_2^n m + \kappa \frac{1-d_2^n}{1-d_2} + \beta -\alpha}{d_1-d_2} + \mathbb{Z} \,\,\,\textrm{ for all } n\geq 0.$$
\item  If $d_2=1$
$$x_n \in \frac{m + \kappa n + \beta -\alpha}{d_1-1} + \mathbb{Z} \,\,\, \textrm{ for all } n\geq 0.$$
\end{itemize}

\end{lemma}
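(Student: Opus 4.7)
The plan is to exploit the fact that on the set $C$ the two liftings $F$ and $G$ differ by an integer, which allows us to replace the alternating orbit by the pure $G$-orbit of $x_0$ modulo $\mathbb{Z}$; once this reduction is made, the two claimed formulas are just the standard closed form for iterates of the affine map $G$.

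First I would record the arithmetic observation that $F(y)-G(y)\in\mathbb{Z}$ for every $y\in C$. Indeed, writing $y=(m'+\beta-\alpha)/(d_1-d_2)$ with $m'\in\mathbb{Z}$, one has $F(y)-G(y)=(d_1-d_2)y+(\alpha-\beta)=m'\in\mathbb{Z}$. Combined with the $F$- and $G$-invariance of $C$ from Lemma~\ref{L:Candi}, this is all the structural input needed.

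Next, an induction on $n$ establishes that $x_n\equiv G^n(x_0)\pmod{\mathbb{Z}}$. The case $n=0$ is immediate. Assume $x_n=G^n(x_0)+k_n$ with $k_n\in\mathbb{Z}$, and note that $G^n(x_0)\in C$. If step $n+1$ is a $G$-step, then $x_{n+1}=G(x_n)=G^{n+1}(x_0)+d_2 k_n$, which is congruent to $G^{n+1}(x_0)$ modulo $\mathbb{Z}$. If it is an $F$-step, then $x_{n+1}=F(x_n)=F(G^n(x_0))+d_1 k_n$, and the difference $F(G^n(x_0))-G^{n+1}(x_0)=F(G^n(x_0))-G(G^n(x_0))$ is an integer by the observation above applied to $G^n(x_0)\in C$; hence again $x_{n+1}\equiv G^{n+1}(x_0)\pmod{\mathbb{Z}}$.

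Finally I would compute $G^n(x_0)$ explicitly. A routine induction gives $G^n(x)=d_2^n x+\beta(1-d_2^n)/(1-d_2)$ when $d_2\neq 1$, and $G^n(x)=x+n\beta$ when $d_2=1$. Substituting $x_0=(m+\beta-\alpha)/(d_1-d_2)$ (respectively $(m+\beta-\alpha)/(d_1-1)$) into these expressions and using the defining identity $\kappa=(d_1-1)\beta-(d_2-1)\alpha$ to rewrite the constant part yields the two closed forms stated in the lemma. The main obstacle is the bookkeeping in this last algebraic simplification: in the $d_2\neq 1$ case one must verify, after placing everything over the common denominator $d_1-d_2$, that the accumulated constant collapses to $\kappa(1-d_2^n)/(1-d_2)+\beta-\alpha$. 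This is a short computation that rests on the telescoping identity $\beta(1-d_2^n)=[\kappa+(d_2-1)\alpha]\cdot(1+d_2+\cdots+d_2^{n-1})$, which is just $\beta(d_1-d_2)=\kappa+(d_2-1)(\alpha-\beta)+\cdots$ multiplied by the geometric sum; the $d_2=1$ case collapses even more directly because then $\kappa=(d_1-1)\beta$.
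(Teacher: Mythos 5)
Your proof is correct, but it takes a genuinely different route from the paper's. The paper argues by brute-force iteration of the alternating system: it computes $F(x_0)$ and then $G(F(x_0))$ explicitly over the common denominator $d_1-d_2$, rewriting the constant at each step via $\kappa=\beta(d_1-1)-\alpha(d_2-1)$, and closes with ``the proof concludes easily by induction.'' You instead first isolate the structural fact that $F(y)-G(y)\in\mathbb{Z}$ for every $y\in C$ (a one-line computation, essentially the case $s=1$ of Lemma~\ref{L:Candi}(4)), combine it with the $G$-invariance of $C$ to show by induction that $x_n\equiv G^n(x_0)\pmod{\mathbb{Z}}$ whether each step is an $F$-step or a $G$-step, and then read off the result from the standard closed form for iterates of the affine map $G$. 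This buys a cleaner induction --- the alternation pattern becomes irrelevant, so your argument actually covers arbitrary words in $F$ and $G$ --- at the cost of leaning on Lemma~\ref{L:Candi}, whereas the paper's computation is self-contained but messier. I checked your final simplification: substituting $x_0=(m+\beta-\alpha)/(d_1-d_2)$ into $G^n(x)=d_2^nx+\beta\frac{1-d_2^n}{1-d_2}$ gives exactly (not merely modulo $\mathbb{Z}$) the stated representative, because $\beta(d_1-d_2)-\kappa=(d_2-1)(\alpha-\beta)$. One caution on the write-up: the ``telescoping identity'' $\beta(1-d_2^n)=[\kappa+(d_2-1)\alpha]\cdot(1+d_2+\cdots+d_2^{n-1})$ is false as written, since its right-hand side equals $\beta(d_1-1)\frac{1-d_2^n}{1-d_2}$; the identity you actually need is the one in your closing clause, $\beta(d_1-d_2)=\kappa+(d_2-1)(\alpha-\beta)$, multiplied by the geometric sum. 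That is a slip in describing a routine computation, not a gap in the argument.
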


\begin{proof}
By iterating,
	\begin{eqnarray*}
	x_1&=&F(x_0)=d_1 x_0 +\alpha = \frac{d_1 m + d_1 (\beta-\alpha) + \alpha (d_1-d_2)} {d_1-d_2}\\
	&=&\frac{(d_1 - d_2)m + d_2 m+ \beta (d_1 -1) - \alpha (d_2- 1) +\beta -\alpha} {d_1-d_2}\\
	&=& m + \frac{d_2 m+ \kappa +\beta -\alpha} {d_1-d_2} \in \frac{d_2 m + \kappa + \beta -\alpha}{d_1-d_2} + \mathbb{Z},
	\end{eqnarray*}
	and $x_2=G(x_1)\in G(\frac{d_2 m+ \kappa +\beta -\alpha} {d_1-d_2})+\mathbb{Z}.$
	Since \begin{eqnarray*}
	& &G\left(\frac{d_2 m+ \kappa +\beta -\alpha} {d_1-d_2}\right)=
	\frac{d_2^2 m + d_2 \kappa +d_2 (\beta-\alpha)} {d_1-d_2} +\beta  \\
	&=& \frac{d_2^2 m + d_2 \kappa +d_2 (\beta-\alpha) + \beta(d_1-d_2)} {d_1-d_2}
	=  \frac{d_2^2 m + d_2 \kappa +d_1 \beta - d_2 \alpha} {d_1-d_2} \\
	&=& \frac{d_2^2 m + d_2 \kappa +\beta (d_1 -1)- \alpha (d_2 - 1) +\beta - \alpha} {d_1-d_2}
	=  \frac{d_2^2 m + \kappa (d_2 + 1)+\beta - \alpha} {d_1-d_2},
	\end{eqnarray*}
	we deduce that $x_2\in \frac{d_2^2 m + \kappa (d_2 + 1)+\beta - \alpha} {d_1-d_2} + \mathbb{Z}.$
	
Now, the proof concludes easily by induction.
\end{proof}

Next, we present some properties relative to congruence of integer numbers. Following \cite{Apo}, if $a,b$ are integers,
we take $\gcd(a,b)$ as the non-negative integer $d$ such that $d$ is a common divisor of $a$ and $b$ and
every common divisor divides $d$. Notice that $\gcd(a,b)=0$ if, and only if, $a=b=0$; otherwise, $d\geq 1$. Moreover,
when we write a congruence modulo $q$,$\modu(q)$, it must be understood that $q>0$.

\begin{lemma}\label{L:Congru}
Let $j,q\in\Z\menos\{0\}$, $q>0$. Then  $\mathrm{gcd}(j,q)=1$ if and only if $j^n \equiv 1\modu  (q)$ for some
$n\in\N.$
\end{lemma}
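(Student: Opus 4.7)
The plan is to prove the two implications separately, both by short elementary arguments; no real obstacle is anticipated since this is a standard fact about units modulo $q$.

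For the implication $(\Leftarrow)$, I would assume that $j^n\equiv 1\modu(q)$ for some $n\in\N$, so that $q\mid j^n-1$ and hence there exists $k\in\Z$ with $j^n-kq=1$. Setting $d:=\gcd(j,q)\geq 1$, the divisibilities $d\mid j$ and $d\mid q$ yield $d\mid j^n$ and $d\mid kq$, whence $d\mid(j^n-kq)=1$. Therefore $d=1$.

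For the implication $(\Rightarrow)$, assume $\gcd(j,q)=1$. I would distinguish the trivial case $q=1$ (where $j^1\equiv 0\equiv 1\modu(1)$) from $q\geq 2$. In the latter case, consider the residues of the infinite sequence $j^0, j^1, j^2,\ldots$ modulo $q$. Since $\Z_q$ is finite, by the pigeonhole principle there exist integers $0\leq a<b$ with $j^a\equiv j^b\modu(q)$, i.e. $q\mid j^a(j^{b-a}-1)$. Because $\gcd(j,q)=1$ implies $\gcd(j^a,q)=1$, the factor $j^a$ can be cancelled (equivalently, $j$ represents a unit in $\Z_q$, so $j^a$ is invertible), which gives $j^{b-a}\equiv 1\modu(q)$ with $n:=b-a\in\N$.

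Alternatively, for the forward direction one can simply invoke Euler's theorem to conclude $j^{\varphi(q)}\equiv 1\modu(q)$; I would prefer the pigeonhole argument above since it is self-contained and matches the elementary tone of Section~\ref{S:preliminar}. Care must be taken only to allow $j$ to be negative, but this causes no problem because $\gcd(j,q)=\gcd(|j|,q)$ and both arguments above work equally well for negative $j$.
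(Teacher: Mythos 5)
Your proof is correct. The $(\Leftarrow)$ direction is essentially identical to the paper's: both write $j^n\equiv 1\modu(q)$ as $j^n-uq=1$ and divide by $d=\gcd(j,q)$ to force $d=1$. For the $(\Rightarrow)$ direction the paper simply invokes the Euler--Fermat theorem to get $j^{\varphi(q)}\equiv 1\modu(q)$ (adding a remark that $\varphi(n)$ is even for $n\geq 3$ so as to cover negative $j$), whereas your primary argument is the pigeonhole one: two powers $j^a\equiv j^b\modu(q)$ must collide, and since $j^a$ is a unit it can be cancelled to give $j^{b-a}\equiv 1\modu(q)$. Your route is more self-contained (it does not presuppose Euler--Fermat) and handles negative $j$ with no extra discussion, since the cancellation step never passes through $|j|$; the paper's route is shorter but leans on an external theorem and needs the small parity remark. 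You also note the Euler alternative explicitly, so the two proofs effectively coincide up to this choice of tool. One tiny point of care in your version: the reduction ``$\gcd(j,q)=\gcd(|j|,q)$'' alone would not suffice to transfer the conclusion $|j|^n\equiv 1$ back to $j^n\equiv 1$ when $n$ is odd, but this is moot because your pigeonhole argument works directly with $j$ itself and never makes that reduction.
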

\begin{proof}
Let $d=\gcd(j,q).$ If $d=1$, the well-known Euler-Fermat's theorem establishes that
$j^{\varphi(q)}\equiv 1 \modu (q)$, where $\varphi(\cdot)$ is the Euler function, that is,
$\varphi(n)=\mathrm{Card}\{m\in\N : \gcd(m, n) = 1\}$ (for instance, see \cite{Apo} for a proof; realize that
$\varphi(n)$ is even for all $n\geq 3$, so we can take a general integer $j$ different from $0$).

On the other hand, the condition $j^n \equiv 1\modu  (q)$ for some positive integer $n$ is equivalent to $j^n -uq =1$ for some $u\in\mathbb{Z}$.
As $\frac1d=\frac{j^n}{d} -\frac{uq}{d}\in\Z$, we conclude that $d=1.$
\end{proof}

\begin{lemma}\label{L:Congru2}
Let $q$ be a positive integer. Let $j,m\in\Z\menos\{0\}.$ Then the following statements are equivalent:
\begin{enumerate}
\item $mj^{n}\equiv m\modu (q)$ for some $n\geq 1$,
\item $j^{n}\equiv 1\modu  ( \frac{q}{\mathrm{gcd}(q,m)})$ for some $n\geq 1,$
\item $\gcd\left(j,\frac{q}{\mathrm{gcd}(q,m)}\right)=1.$
\end{enumerate}
\end{lemma}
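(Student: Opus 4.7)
The natural plan is to factor out the common part of $q$ and $m$ and then reduce to the previous lemma. Set $d:=\gcd(q,m)$, and write $q=dq^{*}$, $m=dm^{*}$, so that by construction $\gcd(q^{*},m^{*})=1$ and $q^{*}=q/\gcd(q,m)$.

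First I would prove the equivalence of (1) and (2). The congruence $mj^{n}\equiv m\modu(q)$ is equivalent to $q\mid m(j^{n}-1)$, which after substituting $q=dq^{*}$ and $m=dm^{*}$ becomes $dq^{*}\mid dm^{*}(j^{n}-1)$, i.e., $q^{*}\mid m^{*}(j^{n}-1)$. Using the standard fact that if $\gcd(a,b)=1$ and $a\mid bc$ then $a\mid c$ (together with its trivial converse), this is in turn equivalent to $q^{*}\mid (j^{n}-1)$, that is, $j^{n}\equiv 1\modu(q^{*})$, which is exactly (2).

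Next, the equivalence of (2) and (3) follows from a direct application of Lemma~\ref{L:Congru} with its modulus taken to be $q^{*}=q/\gcd(q,m)$. Observe that $q^{*}$ is a positive integer because $q>0$ and $d=\gcd(q,m)\geq 1$ (here we use $m\neq 0$), and $j\neq 0$ by hypothesis, so Lemma~\ref{L:Congru} applies and yields $j^{n}\equiv 1\modu(q^{*})$ for some $n\geq 1$ if and only if $\gcd(j,q^{*})=1$.

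The only potential subtlety is the degenerate case $q^{*}=1$ (which occurs when $q\mid m$): then (2) and (3) become trivially true, while (1) reduces to $q\mid m(j^{n}-1)$, and this also holds automatically because $q\mid m$. Hence the three conditions remain equivalent (all true) in that edge case. No genuine obstacle is expected here; the proof is essentially bookkeeping around the gcd, with the main reduction being the passage from $q\mid m(j^{n}-1)$ to $q^{*}\mid (j^{n}-1)$ via the coprimality of $q^{*}$ and $m^{*}$, which is the standard step and where any missed case would hide.
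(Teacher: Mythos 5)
Your proof is correct and follows essentially the same route as the paper: dividing the relation $q\mid m(j^{n}-1)$ by $\gcd(q,m)$ and using the coprimality of $q/\gcd(q,m)$ and $m/\gcd(q,m)$ to get the equivalence of (1) and (2), then invoking Lemma~\ref{L:Congru} for the equivalence of (2) and (3). Your explicit check of the degenerate case $q/\gcd(q,m)=1$ is a small addition the paper omits, but it changes nothing substantive.
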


\begin{proof}
By Lemma~\ref{L:Congru}, we have that (2) and (3) are equivalent. On the other hand, the congruence $mj^{n}\equiv m\modu (q)$ is equivalent to the equation
$m(j^n-1)=uq$ for some $u\in\mathbb{Z}$. Then $\frac{m}{\gcd(q,m)}(j^n-1)=u\frac{q}{\gcd(q,m)}.$ Taking into account that
$\gcd\left(\frac{m}{\gcd(q,m)}, \frac{q}{\gcd(q,m)}\right)=1,$
we obtain  the equivalence between (1) and (2).
\end{proof}

Next lemma will play an important role in the proof of Proposition \ref{P:orders}.
The reader should have in mind the meaning of the integers $d_1$, $d_2$ and $\kappa$.

\begin{lemma}\label{L:mnew}
Let $d_1,d_2\in \Z$, $d_1\neq d_2$, $d_2\not\in\{0,1\}$. Fix $m\in\{1,\ldots,|d_1-d_2|\}$, with $m(d_2-1)+\kk\not=0.$
 Then:
\begin{enumerate}
\item There exists a minimal value $n_0\geq 0$ such that
\begin{equation}\label{Eq:n0}
m d_2^{n_0}+\kappa \frac{d_2^{n_0}-1}{d_2-1}\equiv m d_2^{n}+\kappa \frac{d_2^{n}-1}{d_2-1}\,\,\modu (|d_1-d_2|)
\end{equation}
for some $n> n_0$.
\item Let $n_{m,\kappa}>n_0$ be the smallest value $n>n_0$ for which (\ref{Eq:n0}) holds. Then $\varepsilon_{m,\kappa}:=n_{m,\kappa}-n_0$ is the smallest positive integer $\ell$ such that
\begin{equation}\label{Eq:nmk}
d_2^{\ell}\equiv 1\modu  \left(\frac{|(d_1-d_2)(d_2-1)|}{\gcd\left((d_1-d_2)(d_2-1),d_2^{n_0}(m[d_2-1]+\kappa)\right)}\right).
\end{equation}
\item The integer $n_0\geq 0$ defined in (1) is the minimal non-negative integer $N$ such that
\begin{equation}\label{Eq:n0Calculo}
\gcd\left(d_2,\frac{(d_1-d_2)(d_2-1)}{\gcd\left((d_1-d_2)(d_2-1),d_2^{N}(m[d_2-1]+\kappa)\right)}\right)=1.
\end{equation}
\end{enumerate}
 \end{lemma}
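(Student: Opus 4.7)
The plan is to reduce (\ref{Eq:n0}) to a single multiplicative congruence and then invoke Lemmas~\ref{L:Congru} and \ref{L:Congru2} to decide both the existence of solutions and the minimal index.

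First I would set $a_n := md_2^n + \kappa\frac{d_2^n-1}{d_2-1}$, which is an integer because the second term equals $\kappa(1+d_2+\cdots+d_2^{n-1})$. A short telescoping computation gives
$$a_{n_0+\ell}-a_{n_0}= d_2^{n_0}(d_2^{\ell}-1)\cdot\frac{m(d_2-1)+\kappa}{d_2-1}.$$
Writing $\Delta:=|d_1-d_2|$, $Q:=|(d_1-d_2)(d_2-1)|$ and $A_N:=d_2^{N}(m(d_2-1)+\kappa)$, multiplication of the congruence $a_{n_0+\ell}\equiv a_{n_0}\modu (\Delta)$ by the positive integer $|d_2-1|$ (the sign of $d_2-1$ being immaterial for divisibility) shows that it is equivalent to
$$A_{n_0}\,d_2^{\ell}\equiv A_{n_0}\modu (Q).$$
The hypothesis $m(d_2-1)+\kappa\neq 0$ together with $d_2\neq 0$ ensures $A_{n_0}\neq 0$, so Lemma~\ref{L:Congru2} is applicable with $j=d_2$, $m\mapsto A_{n_0}$ and $q=Q$. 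By the equivalence of its items (1) and (2), the previous congruence is the same as
$$d_2^{\ell}\equiv 1\modu \bigl(Q/\gcd(Q,A_{n_0})\bigr),$$
and, by the equivalence of (2) and (3) in that same lemma, some $\ell\geq 1$ solves it if and only if $\gcd\bigl(d_2,\,Q/\gcd(Q,A_{n_0})\bigr)=1$.

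The three parts of the statement now follow directly. For Part (1), the sequence $(a_n \modu \Delta)$ lives in the finite set $\mathbb{Z}_{\Delta}$, so pigeonhole yields indices $i<j$ with $a_i\equiv a_j$; the set $\{N\geq 0:\exists\, n>N,\;a_N\equiv a_n\modu (\Delta)\}$ is thus a nonempty subset of $\mathbb{N}\cup\{0\}$ and has a minimum $n_0$. For Part (2), $\varepsilon_{m,\kappa}=n_{m,\kappa}-n_0$ is by definition the least $\ell\geq 1$ satisfying $a_{n_0+\ell}\equiv a_{n_0}\modu (\Delta)$, which by the reduction above is precisely the least $\ell\geq 1$ satisfying (\ref{Eq:nmk}). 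For Part (3), an $N\geq 0$ plays the role of a valid $n_0$ (i.e.\ admits some $\ell\geq 1$ with $a_N\equiv a_{N+\ell}\modu (\Delta)$) if and only if $\gcd\bigl(d_2,\,Q/\gcd(Q,A_N)\bigr)=1$, so the minimal such $N$ is $n_0$, which is exactly (\ref{Eq:n0Calculo}).

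The only delicate step I expect is the bookkeeping surrounding the factor $d_2-1$: checking that multiplying through by $|d_2-1|$ is a genuine equivalence of divisibilities (it is, because $|d_2-1|>0$ and the relevant quotient is already integral) and that the hypothesis $m(d_2-1)+\kappa\neq 0$ is exactly what keeps $A_N$ nonzero so that Lemma~\ref{L:Congru2} can be legitimately applied at every step. Once these verifications are made, the rest of the argument is mechanical.
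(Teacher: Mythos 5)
Your proposal is correct and follows essentially the same route as the paper: finiteness/pigeonhole for Part (1), clearing the denominator $d_2-1$ to pass from a congruence modulo $|d_1-d_2|$ to one modulo $|(d_1-d_2)(d_2-1)|$, and then Lemma~\ref{L:Congru2} (equivalently Lemma~\ref{L:Congru}) for Parts (2) and (3). The only cosmetic difference is that you package the computation as a closed-form expression for $a_{n_0+\ell}-a_{n_0}$ rather than a chain of displayed equivalences, and you appeal to the per-$\ell$ equivalence of items (1) and (2) of Lemma~\ref{L:Congru2}, which is exactly what the paper's own minimality argument also uses implicitly.
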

\begin{proof}
(1) As the set $\left\{m d_2^{r}+\kappa \frac{d_2^{r}-1}{d_2-1} \modu(|d_1-d_2|):r\in \N\right\}$ is finite, then $n_0$ and $n$ exist, and then (1) follows.

(2) Similarly, we prove the existence of $n_{m,\kappa}> n_0$ being the smallest value fulfilling that
$$m d_2^{n_0}+\kappa \frac{d_2^{n_0}-1}{d_2-1}\equiv
m d_2^{n_{m,\kappa}}+\kappa \frac{d_2^{n_{m,\kappa}}-1}{d_2-1}\modu (|d_1-d_2|).$$
Equivalently:
\begin{eqnarray*}
 (d_2^{n_{m,\kappa}}-d_2^{n_0})m+\kappa\frac{d_2^{n_{m,\kappa}}-d_2^{n_0}}{d_2-1}
 \equiv 0\modu (|d_1-d_2|)
 \\
 \Leftrightarrow
 d_2^{n_0}(d_2^{n_{m,\kappa}-n_0}-1)m+\kappa d_2^{n_0}\frac{d_2^{n_{m,\kappa}-n_0}-1}{d_2-1}
 \equiv 0\modu (|d_1-d_2|)
 \\
 \Leftrightarrow
(d_2^{n_{m,\kappa}-n_0}-1)
d_2^{n_0}\left[
m+\kappa \frac{1}{d_2-1}
\right]
 \equiv 0\modu (|d_1-d_2|)
 \\
 \Leftrightarrow
d_2^{n_{m,\kappa}-n_0}
d_2^{n_0}\left[
m (d_2-1)+\kappa
\right]
 \equiv
 d_2^{n_0}\left[
m (d_2-1)+\kappa
\right]
\modu \left(|(d_1-d_2)(d_2-1)|\right).
\end{eqnarray*}

Now, by Lemma~\ref{L:Congru2} (notice that $m(d_2-1 ) +\kappa\neq 0$)
$$
d_2^{\varepsilon_{m,\kappa}}=d_2^{n_{m,\kappa}-n_0}
\equiv
1
\modu
\left(
\frac{|(d_1-d_2)(d_2-1)|}{
\gcd\left(
(d_1-d_2)(d_2-1),
d_2^{n_0}\left[
m (d_2-1)+\kappa
\right]
\right)}
\right)
$$
we prove (2), since if it would exist a positive value $\ell$ smaller than $\varepsilon_{m,\kappa}$
holding $(\ref{Eq:nmk})$, then by the above equivalences $n_\ell:=n_0+\ell<n_{m,\kappa}$ would satisfy $(\ref{Eq:n0})$, in contradiction
with the minimality of $n_{m,\kappa}.$

(3) Finally,  Lemma~\ref{L:Congru} implies that
$$\gcd\left(d_2,
\frac{(d_1-d_2)(d_2-1)}{
\gcd\left(
(d_1-d_2)(d_2-1),
d_2^{n_0}\left[
m (d_2-1)+\kappa
\right]
\right)}
\right)=1,$$ which concludes the proof.
\end{proof}

\begin{definition}[$\varepsilon_{m,\kk}$] Given  $d_1,d_2\in \Z$, $d_1\neq d_2$ with $\kk=\kkdef\in\Z$,
let $m\in\{1,2,\ldots,|d_1-d_2|\}.$ Then:
\label{D:defepsilonmk}
\begin{equation}\label{E:defepsilommk}
 \varepsilon_{m,\kk}:=\left\{
 \begin{array}{lcl}
 1&&\textrm{ if } m(d_2-1)+\kk=0,\\
 1&&\textrm{ if } m(d_2-1)+\kk\not=0,d_2=0,\\
N
&&\textrm{ if } m(d_2-1)+\kk\not=0,d_2=1,\\
  \textrm{the value of Lemma~\ref{L:mnew}-(2)}&&\textrm{otherwise.}
 \end{array}
 \right.
\end{equation}

The integer $N>0$ is the smallest positive integer $n$ satisfying
$$n\kappa\equiv 0\modu\left(|d_1-1|\right).$$
\end{definition}

Next, we give a characterization of periodic orbits shared by two affine circle maps with different degrees.

\begin{proposition}\label{P:orders}
Let $f,g\in C(\S^1)$ be affine maps with liftings $F(x)=d_1 x + \alpha$ and $G(x)=d_2 x +\beta,$ with $d_1\neq d_2$
and such that  $\kappa \in\Z$.																
Let $x_0=\frac{m + \beta -\alpha}{d_1-d_2}\in C$, with $m\in\{1,\ldots,|d_1-d_2|\}.$ Then:
\begin{itemize}
\item[(a)] $e(x_0)$ is an eventually periodic point shared by $f$ and $g$.
\item[(b)] If $\ell\in\N$ is the smallest non-negative integer holding that
$f^\ell(e(x_0))=g^\ell(e(x_0))$ is periodic, then its period is
$\lambda=\varepsilon_{m,\kk}$  defined by  Equation~(\ref{E:defepsilommk}).
\end{itemize}
\end{proposition}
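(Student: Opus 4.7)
Part (a) is an immediate consequence of Lemma~\ref{L:RCandi}: since $x_0\in C$, the orbit of $z_0=e(x_0)$ under $[f,g]$ lies in the finite set $e(C)$, hence it is eventually periodic, and by the last item of Lemma~\ref{L:Candi} the returning cycle is shared by $f$ and $g$.

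The strategy for part (b) is to translate the eventual period of the orbit into a congruence condition by means of Lemma~\ref{L:Iter_mk}, and then invoke Lemma~\ref{L:mnew}. Once the orbit of $x_0$ enters its cycle at the smallest index $\ell\geq 0$, Lemma~\ref{L:eventOrperC}(b) identifies $\lambda$ with the smallest $p\geq 1$ such that $x_{\ell+p}-x_\ell\in\mathbb{Z}$. Substituting the explicit form of $x_n$ from Lemma~\ref{L:Iter_mk} converts this integer-shift condition into a congruence modulo $|d_1-d_2|$ on the numerators, precisely in the shape of Equation~(\ref{Eq:n0}).

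The argument then splits along the cases in Definition~\ref{D:defepsilonmk}. If $m(d_2-1)+\kappa=0$, substituting $\kappa=m(1-d_2)$ in the first formula of Lemma~\ref{L:Iter_mk} collapses the numerator to $m$, so $x_n\equiv x_0\pmod{\mathbb{Z}}$ for every $n$, yielding $\lambda=1$. If $d_2=0$, the same formula reduces to a constant value modulo $\mathbb{Z}$ for every $n\geq 1$, again giving $\lambda=1$. When $d_2=1$, the second formula of Lemma~\ref{L:Iter_mk} forces $x_{n+p}-x_n\in\mathbb{Z}$ to be equivalent to $\kappa p\equiv 0\pmod{|d_1-1|}$, which is precisely the defining property of the integer $N$ from Definition~\ref{D:defepsilonmk}. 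Finally, in the generic case $m(d_2-1)+\kappa\neq 0$ with $d_2\notin\{0,1\}$, Lemma~\ref{L:mnew}(2) identifies the smallest such $p$ with $\varepsilon_{m,\kappa}$, and Lemma~\ref{L:mnew}(3) guarantees that $\ell$ coincides with the integer $n_0$ appearing there, confirming that the eventual period is indeed $\varepsilon_{m,\kappa}$.

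The main obstacle lies in the generic case, which is entirely delegated to Lemma~\ref{L:mnew}; the arithmetic transfer from the congruence on $d_2^{n_0+p}m+\kappa(1-d_2^{n_0+p})/(1-d_2)$ to a clean condition $d_2^p\equiv 1$ modulo a reduced modulus is carried out there via Lemmas~\ref{L:Congru} and~\ref{L:Congru2}. A small but necessary check is that the period read off after the pre-period does not depend on the particular entry index into the cycle; this is exactly what Lemma~\ref{L:eventOrperC} ensures, since once $x_n\equiv x_{n+p}\pmod{\mathbb{Z}}$ holds at one index, the stability of $C$ under $F$ and $G$ propagates the congruence to every subsequent index.
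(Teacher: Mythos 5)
Your proposal is correct and follows essentially the same route as the paper: part (a) via Lemma~\ref{L:RCandi} and Lemma~\ref{L:Candi}, and part (b) by the same four-way case split of Definition~\ref{D:defepsilonmk}, reducing the period to an integer-shift condition via Lemma~\ref{L:eventOrperC} and Lemma~\ref{L:Iter_mk} and delegating the generic case to Lemma~\ref{L:mnew}. The only (cosmetic) difference is that you read the first two cases off the general formula of Lemma~\ref{L:Iter_mk}, where the paper computes the first few iterates by hand; just note that when $d_2=1$ and $m(d_2-1)+\kappa=0$ one must use the second formula of that lemma (with $\kappa=0$) rather than the first.
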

\begin{proof}
Let  $(x_n)$ be the orbit of $x_0\in C$ generated by $[F,G]$,
then by Lemma~\ref{L:RCandi} the orbit $(e(x_n))$ in $\mathbb{S}^1$, generated by $\sisalter$, is either periodic or eventually periodic.
Notice that this (eventually) periodic orbit is shared by $f$ and $g$ because,
by Lemma~\ref{L:Candi}, $F^n(x_0)-G^n(x_0)\in\Z$ for all $n\in \N$.
Let $n_0\geq 0$ be the first integer such that $f^{n_0}(e(x_0))$ is periodic.

Now we distinguish these cases.
\begin{description}
\item[(1)] \underline{$m(d_2-1)+\kk=0.$} We have 
\begin{eqnarray*}
 x_1&=&
F(x_0)=d_1\left(\frac{m + \beta -\alpha}{d_1-d_2}\right)+\alpha=\frac{md_1+\beta d_1 -\alpha d_2}{d_1 - d_2}\\
&=&\frac{md_1+\kappa +\beta-\alpha}{d_1-d_2}=\frac{md_1 +m(1-d_2)+\beta-\alpha}{d_1-d_2}\\
&=&m+\frac{m + \beta -\alpha}{d_1-d_2} \in x_0 +\Z,
\end{eqnarray*}
 and then $x_1-x_0\in\Z$. Now, Lemma~\ref{L:eventOrperC} implies that the sequence $(x_n)$ generated by $[F,G]$
is a lifted cycle and $z=e(x_0)$ is a cycle of order $\varepsilon_{m,\kk}=1$ of $\sisalter$.

\item[(2)] \underline{$ m(d_2-1)+\kk\not=0 , d_2=0.$}  In this case, for a given $x_0\in C$ the lifted orbit is
$$x_0,x_1=d_1x_0+\alpha,x_2=\beta,x_3=d_1\beta+\alpha,x_4=\beta,x_5=d_1\beta+\alpha,\dots$$
Observe that $\kk=\beta(d_1-1)+\alpha,$ so $d_1\beta +\alpha=\kk +\beta,$ and consequently
$x_n-x_{n-1}=(-1)^n\kk\in\Z$ for any $n\geq 3$. Therefore, the orbit of $e(f^2(x_0))$ is periodic
of period $\varepsilon_{m,\kk}=1$.

\item[(3)] \underline{$ m(d_2-1)+\kk\not=0 , d_2=1.$}
Let $x_0\in C$ and let $(x_n)$ be the orbit generated by $[F,G]$. According to Lemma~\ref{L:RCandi}, let $n_0$ be the smallest
non-negative integer such that $f^{n_0}(e(x_0))$ is a periodic point in $[f,g]$ or, equivalently, such that $x_{n_0}$ generates
a periodic lifted cycle in $[F,G]$. By Lemma~\ref{L:Iter_mk},
 $$x_n\in  \frac{m + \kappa n + \beta -\alpha}{d_1-1} + \mathbb{Z}=x_0 +\frac{\kappa n }{d_1-1}+\mathbb Z$$
 and $x_n-x_{n_0}\in\Z$ implies
$\frac{(n-n_0)\kk}{d_1-1}\in\Z$. By Lemma~\ref{L:eventOrperC} and the last congruence, the period is equal to
the first positive integer $N$ satisfying $N\kk\equiv 0 \modu(|d_1-1|)$, the value of $\varepsilon_{m,\kk}$
in Definition~\ref{D:defepsilonmk}.

\item[(4)]  \underline{$ m(d_2-1)+\kk\not=0 , d_2\notin\{0,1\}.$}
Then, by Lemma~\ref{L:mnew} (cf. Equation~(\ref{E:defepsilommk}) in Definition~\ref{D:defepsilonmk}),  the period of $f^{n_0}(e(x_0))$
is given by $\varepsilon_{m,\kk}$, as described in the statement.

\end{description}

\end{proof}

Thus, we have proved the following result:

\begin{theorem}\label{T:oddLambda}
Let $f,g\in C(\S^1)$ be affine circle maps with liftings $F(x)=d_1 x + \alpha$ and $G(x)=d_2 x +\beta,$ with $d_1\neq d_2$ and such that
  $\kappa=\beta(d_1-1)-\alpha(d_2-1)\in\Z$.	
Then the set $\Lambda$ of odd periods from $\mathrm{Per}[f,g]$ is given by
$$\Lambda=\{\lambda=\varepsilon_{m,\kk}: 1\leq m\leq |d_1-d_2|, \lambda \text{ odd }\},$$
where $\varepsilon_{m,\kk}$ is  introduced in Definition~\ref{D:defepsilonmk}.
\end{theorem}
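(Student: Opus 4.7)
The plan is to prove both inclusions using the technical preparation of this section, in particular Lemma~\ref{L:Candi}, Proposition~\ref{puntocompartido} and Proposition~\ref{P:orders}. Theorem~\ref{T:oddPeriods} guarantees $\Lambda=\emptyset$ when $\kk\notin\Z$, so the hypothesis $\kk\in\Z$ is in force and no case analysis in $\kk$ is required.

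For the inclusion $\supseteq$, I would fix $m\in\{1,\ldots,|d_1-d_2|\}$ for which $\lambda:=\varepsilon_{m,\kk}$ is odd and set $x_0=\frac{m+\beta-\alpha}{d_1-d_2}\in C$. By Proposition~\ref{P:orders}(a), $e(x_0)$ is eventually periodic under $[f,g]$, and by Proposition~\ref{P:orders}(b) the eventual periodic orbit has period exactly $\varepsilon_{m,\kk}=\lambda$. Thus the point $f^{n_0}(e(x_0))$ (for the appropriate transient length $n_0$) is a genuine $\lambda$-periodic point of $[f,g]$, giving $\lambda\in \Per[f,g]\cap\O=\Lambda$.

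For the inclusion $\subseteq$, I would start with $\lambda\in\Lambda$, pick a point $z_0\in\S^1$ whose $[f,g]$-orbit has odd period $\lambda$, and lift it to some $x_0\in\R$ with $e(x_0)=z_0$. Applying Lemma~\ref{L:Char_lift}(4) to this $\lambda$-periodic orbit yields $F^i(x_0)-G^i(x_0)\in\Z$ for every $i\geq 1$, so by Proposition~\ref{puntocompartido} the point $z_0$ is a shared periodic point of $f$ and $g$ of period $\lambda$. Lemma~\ref{L:Candi}(3) then produces $m\in\Z$ with $x_0=\frac{m+\beta-\alpha}{d_1-d_2}$, and Lemma~\ref{L:Candi}(1) combined with the $\Z$-periodicity of $e$ allows me to replace $m$ by its residue in $\{1,\ldots,|d_1-d_2|\}$ without altering $z_0$. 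Finally Proposition~\ref{P:orders}(b) computes this period as $\varepsilon_{m,\kk}$, so $\lambda=\varepsilon_{m,\kk}$.

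I expect the main obstacle to be the interplay between the \emph{eventual} periodicity supplied by Proposition~\ref{P:orders} and the \emph{genuine} periodicity required by the definition of $\Lambda$. This is handled asymmetrically in the two directions: in $\supseteq$ it is enough to observe that the post-transient point $f^{n_0}(e(x_0))$ is a true periodic point and contributes its period to $\Per[f,g]$; in $\subseteq$ we begin with a true periodic point, so the transient is trivial and Proposition~\ref{P:orders}(b) applies with the period itself rather than the eventual period. Once this is articulated the theorem is a clean synthesis of the preceding lemmas, with no additional computation.
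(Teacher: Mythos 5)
Your argument is correct and follows the paper's own route: the paper states the theorem as an immediate consequence of Proposition~\ref{P:orders} (``Thus, we have proved the following result''), and your two inclusions simply make explicit the synthesis of Lemma~\ref{L:Char_lift}, Proposition~\ref{puntocompartido}, Lemma~\ref{L:Candi} and Proposition~\ref{P:orders} that the paper leaves implicit. Your remark on reconciling eventual versus genuine periodicity (transient point in $\supseteq$, trivial transient in $\subseteq$) is exactly the right point to articulate and is handled correctly.
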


The case $\Lambda=\emptyset$ can occur for degree $-1$, for instance consider the affine liftings $F(x)=-x+\alpha, G(x)=x+\beta$, 
with $\beta=\frac{q}{2}$, $q$ odd; here, $d_1=-1, d_2=1$, $\kappa=-2\beta\in\Z$ and $C$ is reduced to two points, 
$\frac{1+\beta-\alpha}{-2}$ and $\frac{2+\beta-\alpha}{-2}$; it is easily seen that
both points generate a periodic sequence of order $2$, an even number.

Nevertheless,  under the conditions of Theorem~\ref{T:oddLambda}: %
\begin{corollary}
Let $d_1=-1$ and assume that $\Lambda\neq\emptyset$. Then $\Lambda=\{1\}.$
\end{corollary}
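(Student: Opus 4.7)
The plan is to invoke Theorem~\ref{T:oddLambda}, which gives $\Lambda = \{\varepsilon_{m,\kappa} : 1\le m\le |d_1-d_2|,\ \varepsilon_{m,\kappa}\text{ odd}\}$, and then to inspect the four cases of Definition~\ref{D:defepsilonmk} under the assumption $d_1=-1$ (so that necessarily $d_2\in\mathbb{Z}\setminus\{-1\}$), showing that in every branch any odd value of $\varepsilon_{m,\kappa}$ is forced to be $1$.

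The first two branches of Definition~\ref{D:defepsilonmk} are immediate: they already yield $\varepsilon_{m,\kappa}=1$. For the third branch, $d_2=1$, the integer $N$ is the smallest positive integer with $N\kappa\equiv 0\pmod{|d_1-1|}$, and since $|d_1-1|=|-1-1|=2$ we get $N\in\{1,2\}$; the unique odd possibility is $N=1$.

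The crucial case is the fourth, where $d_2\notin\{-1,0,1\}$ and $m(d_2-1)+\kappa\neq 0$. Here $\varepsilon_{m,\kappa}$ is the multiplicative order of $d_2$ modulo
\[
q:=\frac{|(d_1-d_2)(d_2-1)|}{\gcd\bigl((d_1-d_2)(d_2-1),\ d_2^{n_0}(m(d_2-1)+\kappa)\bigr)}.
\]
Substituting $d_1=-1$ gives $(d_1-d_2)(d_2-1)=(-1-d_2)(d_2-1)=-(d_2^2-1)$, so $q$ divides $|d_2^2-1|$. Since trivially $d_2^{2}\equiv 1\pmod{|d_2^2-1|}$, it follows that $d_2^{2}\equiv 1\pmod q$, and hence $\varepsilon_{m,\kappa}\in\{1,2\}$. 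The only odd possibility is again $\varepsilon_{m,\kappa}=1$.

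Combining the four branches, $\Lambda\subseteq\{1\}$, and since we assume $\Lambda\neq\emptyset$ we conclude $\Lambda=\{1\}$. The only non-routine step is the algebraic simplification $(d_1-d_2)(d_2-1)=-(d_2^2-1)$ in the fourth case: this is the single observation that collapses the order of $d_2$ to a divisor of $2$ and drives the whole argument, so that is where care must be taken (including checking that $d_2\neq\pm 1$ prevents the modulus from degenerating in an uncontrolled way, which is guaranteed by $d_1\neq d_2$ and by the hypothesis $d_2\notin\{0,1\}$ of the fourth branch).
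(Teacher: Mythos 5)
Your proof is correct and follows essentially the same route as the paper: run through the four branches of Definition~\ref{D:defepsilonmk} and, in the key fourth branch, observe that $(d_1-d_2)(d_2-1)=-(d_2^2-1)$ forces the modulus to divide $|d_2^2-1|$, so $\varepsilon_{m,\kappa}\in\{1,2\}$. The only difference is that the paper first verifies $n_0=0$ before bounding the order, whereas you correctly note that the bound holds for any $n_0$ since the modulus always divides $|d_2^2-1|$ — a harmless (indeed slightly cleaner) shortcut.
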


\begin{proof}
Let $m\in\{1,2,\ldots,|d_1-d_2|\}.$ If $d_1=-1$, then $\kappa=-2\beta - \alpha (d_2-1)\in\mathbb Z$. Fix $m$.
Now, we apply Definition~\ref{E:defepsilommk} and Theorem~\ref{T:oddLambda}. We distinguish, according to the referred definition,
the following cases:

-- If $m(d_2-1)+\kappa=0$ or  $m(d_2-1)+\kappa\neq 0, d_2=0$, then $\varepsilon_{m,\kappa}=1$.

-- If $m(d_2-1)+\kappa\neq 0, d_2=1$, then $\kappa=-2\beta\in\mathbb{Z}$, and we know that, from Definition~\ref{E:defepsilommk},
we have to add the period $N$, the smallest positive integer such that $n\kappa\equiv 0 \modu(|d_1-1|)$, that is, $-2n\beta\equiv 0\modu(|d_1-1|).$
Thus, either $N=1$ or $N=2$. If $N=1$, we add it to $\Lambda$. If $N=2$, this case does not originate a value of $\Lambda$.

-- If $m(d_2-1)+\kappa\neq 0, d_2\notin\{0,1\}, d_2\neq d_1$, by Theorem~\ref{T:oddLambda} and its previous Lemma~\ref{L:mnew},
firstly we have to compute the smallest non-negative integer value $n_0$ such that
$$\gcd\left(d_2,\frac{(d_1-d_2)(d_2-1)}{\gcd\left((d_1-d_2)(d_2-1),d_2^{N}(m[d_2-1]+\kappa)\right)}\right)=1.$$
We short the notation by setting $d=d_2$. Since $d_1=-1$, the above condition is written as
\begin{equation}\label{Eq:cn0}
\gcd\left(d,\frac{1-d^2 }{\gcd\left( 1-d^2, d^{N}(m[d -1]+\kappa)\right)}\right)=1.
\end{equation}
Notice that $1=\gcd(d,d+1)=\gcd(d,d-1)$ if $|d|\geq 2$. From here,
if we put $w:=\frac{1-d^2 }{\gcd\left( 1-d^2, d^{N}(m[d -1]+\kappa)\right)}$, obviously $\gcd(d,w)=1$ since each prime factor of $w$
is a divisor of $1-d^2$ and we know that $\gcd(d,1-d^2)=1.$ Therefore, (\ref{Eq:cn0}) is satisfied by all the non-negative integers and
consequently $n_0=0$. Now, in order to evaluate $\varepsilon_{m,\kappa}=n_{m,\kappa}-n_0=n_{m,\kappa},$ we have to obtain the smallest positive
integer $N$ holding $d^N\equiv 1\modu(\frac{|1-d^2|}{\gcd\left(1-d^2, m(d-1)+\kappa\right)}).$ Taking into account
that $d^2 -1 = \gcd\left(1-d^2, m(d-1)+\kappa\right)\frac{d^2-1}{\gcd\left(1-d^2, m(d-1)+\kappa\right)}$, we find that either $N=1$ or $N=2$.
If $N=2$, this case does not provide a value of $\Lambda$. If $N=1$, we add it to $\Lambda$.

Summarizing, with the study of the above cases, we have seen that the unique odd period we can obtain in $\Lambda$ is precisely $1$.
\end{proof}

\begin{center}
\begin{table}[ht]
\begin{tabular}{|c|c|c|c|c|c|c|
c|}
\hline
$d_1$&$d_2$&$\alpha$&$\beta$&$|d_1-d_2|$& $\kappa$&$\Lambda$
\\ \hline
38&7&0&0&31& 0&$\{1,15\}$
\\ \hline
46&16&0&1/45&30& 1&$\{15\}$
\\ \hline
16&6&0&1/15&10& 1&$\{5\}$
\\ \hline
39&16&0&0&23& 0&$\{1,11\}$
\\ \hline
5&36&0&0&31& 0&$\{1,3\}$
\\ \hline
10&4&0&1/9&6& 1&$\{3\}$
\\ \hline
31&256&2&7&225& -300&$\{1,3,5,15\}$
\\ \hline
10&2&0&1/9&8& 1&$\{1\}$
\\
\hline
31&1&23&1/5&30&6&$\{5\}$
\\ \hline
-1&9&1/4&1&10&-4&$\{1\}$
\\ \hline
\end{tabular}
\caption{Examples of odd periods of alternating systems after applying the proposed algorithm.}
\label{Table:algoritmo}
\end{table}
\end{center}

\section{The algorithm}\label{S:algorithm}

Although Theorem~\ref{T:oddLambda}  provides the characterization of odd periods for an alternating system, often it
is necessary to do a certain number of hard computations to obtain it. Hence, we are going to give
an algorithm to do these calculations with a computer. The program introduced in next subsection has been implemented in \emph{Maxima}, which is a free software of symbolic calculation.

The last function of the program, ComputeLambda, must be called with four arguments ($d_1$, $d_2$, $\alpha$ and  $\beta$)
and returns the set $\Lambda$ of odd periods. This function calls another one called Epsilonmkappa which distinguishes in what case of Definition~\ref{D:defepsilonmk} we are and computes the value of $\varepsilon_{m,\kappa}$. This value is easy to find  when we are in the two first cases of the definition of $\varepsilon_{m,\kappa}$, however  when we are in the third and fourth cases   stronger calculations are needed and made respectively by  the functions Epsilon3 and Epsilon4.

Epsilon3 is an easy function that uses a loop in order to find the least integer $N$ satisfying $N\kappa\equiv 0\modu\left(|d_1-1|\right)$. In exchange, in order to call Epsilon4 from the function Epsilonmkappa we need previously to know the value of $n_0$ satisfying the conditions of Lemma~\ref{L:mnew}-(i). The function Computen0 searches this number recursively, next we call the function Epsilon4 with  $n_0$ and, by means of a loop for, we find the smallest integer $n>n_0$ satisfying the condition in Lemma~\ref{L:mnew}-(ii). This completes the computation of $\varepsilon_{m,\kk}$.

ComputeLambda  works by means of a loop for between the numbers 0 and $|d_1-d_2|$. 
 It checks if the  calculated value by Epsilonmkappa is odd and in this case it adds the number to the set $\Lambda$.

\newcommand{\comentariomaxima}[1]{\textrm{$/*$ #1 $*/$}}
\newcommand{\fincomandomaxima}{\$}
\newcommand{\espaciovertical}{\vspace*{.25cm}}

\subsection*{The algorithm implemented in Maxima.\newline}
\hfill

\noindent\comentariomaxima{In the following definition we compute, by means of a recursive function, the value $n_0$ given by Lemma~\ref{L:mnew}--(1) and (3). This
value is needed to obtain $\varepsilon_{m,\kappa}$ in the fourth alternative of Definition~\ref{D:defepsilonmk}
which we implement in function Epsilon4 below. We call the function with $n=0$, it  checks if the greatest common divisor in Proposition~\ref{P:orders} is
1 and in this case $n_0=0$; if not, the function calls itself with $n=1$, do the test for $n=1$ and so on}

\espaciovertical

\noindent\texttt{
\noindent{}Computen0(d1,d2,m,k,n):=
\newline\indent if gcd(d2,(d1-d2)*(d2-1)/gcd((d1-d2)*(d2-1),d2\^{}n*(m*(d2-1)+k)))=1
\newline\indent \indent then n
\newline\indent \indent else
Computen0(d1,d2,m,k,n+1)\fincomandomaxima
\espaciovertical
}

\noindent\comentariomaxima{The following function, Epsilon4, computes the value of $\varepsilon_{m,\kappa}$ when the values of $d_1,d_2,\alpha$ and $\beta$ are in the condition of the fourth alternative in Definition~\ref{D:defepsilonmk}. This function will be called by Epsilonmkappa}

\espaciovertical
\noindent
\texttt{Epsilon4(d1,d2,k,n0,m):=block(
\newline\indent l:1,
\comentariomaxima{l is the variable used to obtain the value of  $\ell$}
\newline\indent
for n:1 thru abs(d1-d2) while abs(remainder(
\newline\indent\indent d2\^{}n-1,(d1-d2)*(d2-1)/gcd((d1-d2)*(d2-1),d2\^{}n0*(m*(d2-1)+k))))$>$0
\newline\indent\indent do
l:l+1,
\newline\indent
l \comentariomaxima{Epsilon4 returns the value of $\ell$}
\newline)\fincomandomaxima\comentariomaxima{End of definition of Epsilon4}
}
\espaciovertical

\noindent\comentariomaxima{Function Epsilon3, defined below, computes the value of $\varepsilon_{m,\kappa}$ when the values of $d_1,d_2,\alpha$ and $\beta$ are in third alternative in Definition~\ref{D:defepsilonmk}. This function will be called by Epsilonmkappa}

\espaciovertical
\noindent
\texttt{Epsilon3(d1,k):=block(
\newline\indent
N:1, \comentariomaxima{N is the variable used to obtain the value of  $N$}
\newline\indent
modcong:abs(d1-1),
\newline\indent
    for n:1 thru modcong while
    \newline\indent \indent
    abs(remainder(n*k,modcong))$>$0
    \newline\indent \indent
    do N:N+1,
    \newline\indent
    N  \comentariomaxima{Epsilon3 gives back the value of $N$}
    \newline
    )\fincomandomaxima \comentariomaxima{End of definition of Epsilon3}
}
\espaciovertical

\noindent\comentariomaxima{Function Epsilonmkappa computes the value of $\varepsilon_{m,\kappa}$ distinguishing the
four possibilities given in Definition~\ref{D:defepsilonmk}}


\espaciovertical
\noindent
\texttt{
Epsilonmkappa(d1,d2,alpha,beta,m):=block(
\newline\indent     emk:0,
\newline\indent         k:beta*(d1-1)-alpha*(d2-1), \comentariomaxima{the value of $\kappa$}
    \newline\indent     condition:m*(d2-1)+k, \comentariomaxima{the value needed to distinguish the cases of Definition~\ref{D:defepsilonmk}}
    \newline\indent     if condition=0 then emk:1 else(
        \newline\indent\indent     if d2=0 then emk:1 else(
         \newline\indent\indent\indent            if d2=1 then emk:Epsilon3(d1,k) else(
                    \newline\indent\indent\indent\indent n0:Computen0(d1,d2,m,k,0),
                    \newline\indent\indent\indent\indent  emk:Epsilon4(d1,d2,k,n0,m)
                \newline\indent\indent\indent)
  \newline\indent\indent          )
   \newline\indent ),
   \newline\indent
emk \comentariomaxima{we return the value of $\varepsilon_{m,\kappa}$}
\newline)\fincomandomaxima \comentariomaxima{End of definition of Epsilonmkappa}
}
\espaciovertical

\noindent\comentariomaxima{Function ComputeLambda calculates the set $\Lambda$
}

%
%


\espaciovertical
\noindent
\texttt{
ComputeLambda(d1,d2,alpha,beta):=block(
\newline\indent difference:abs(d1-d2),
\newline\indent k:beta*(d1-1)-alpha*(d2-1),
\newline\indent Lambda:[], \comentariomaxima{the vector that will contain the odd period}
\newline\indent if integerp(k) and not(d1=d2) then ( \comentariomaxima{ we check if k is integer}
\newline\indent \indent    for m:0 thru difference do(
\newline\indent \indent\indent        emk:Epsilonmkappa(d1,d2,alpha,beta,m),
\newline\indent \indent\indent        candidate:emk,
\newline\indent \indent\indent        if oddp(candidate) then Lambda:append(Lambda,[candidate])
\newline\indent \indent\indent \comentariomaxima{we only add candidate to Lambda when it is odd}
\newline\indent \indent    )
\newline\indent ) else Lambda:[],
\newline\indent setify(Lambda) \comentariomaxima{the function returns the set made of the components of the components from Lambda}
\newline )\fincomandomaxima \comentariomaxima{End of definition of the function ComputeLambda}
}
\espaciovertical

\noindent\comentariomaxima{  Finally, the computation of the odd periods of an alternating affine system with values $d_1,d_2,\alpha,\beta$ will be obtained by executing the order:}

\espaciovertical
\noindent
\texttt{
ComputeLambda(d1,d2,alpha,beta)
\fincomandomaxima
}

\espaciovertical

\newcommand{\datos}[8]{#1&#2&#3&#4&#5&#6&#7&$#8$
\hline
\\}

We apply this algorithm to some alternating systems and we obtain some additional examples presented in Table~\ref{Table:algoritmo}.

\section{Conclusions}
The algorithm given in this manuscript is a valuable instrument for computing the odd periods in
$\mathrm{Per}(\sisalter)$ when considering affine circle maps, $f$ and $g$,  and it finishes the  problem of characterizing $\mathrm{Per}(\sisalter)$. However we think that in  this line of research a lot of work can be done in the future. We mention some valuable problems to analyze.

Admittedly the problem of characterizing  $\mathrm{Per}(\sisalter)$  for continuous circle maps $f,g$ seems too much ambitious. 
We propose to make a deep analysis of alternating systems $\sisalter$ for circle homeomorphisms since the sets of periods for these maps are simple and depend on the rotation number of the homeomorphisms when the degree is 1. However,  the triviality of the sets of  periods of homeomorphisms does not guarantee an easy control of the set  $\Per(\sisalter)$ since the periods of $f\circ g$ strongly  depends on the rationality of its rotation number. In this point it is interesting to remark that the composition of two   degree 1 homeomorphisms with irrational numbers  (and then with empty set of periods) can give a degree 1 
homeomorphism with rational number (and then with nonempty set of periods), see \cite[Chapter 1, Section 4]{demelovanstrien}.

In the setting of affine circle maps we still propose to make a deep research by considering alternating systems of more than two maps.

\def\cprime{$'$}

\section*{Acknowledgments}
Authors have been partially supported by the Grants MTM2014-52920-P and MTM2017-84079-P from Agencia Estatal de Investigaci\'on (AEI) y Fondo Europeo de Desarrollo Regional (FEDER). We acknowledge the referees for their suggestions which allowed us to improve the reading of the paper.

\end{document}